\definecolor{lightgray}{gray}{0.9}
\begin{document}
\title{ \bf \Large Statistical Inference on the Cumulative Distribution Function using Judgment Post Stratification}
\author{
\textbf{Mina Azizi Kouhanestani, \footnote{Department of Mathematical Sciences, Isfahan University of Technology,  Isfahan, 84105, Iran.  m.azizi@math.iut.ac.ir } \ \ Ehsan Zamanzade\footnote{ Department of Statistics, Faculty of Mathematics and Statistics, University of Isfahan, Isfahan 81746-73441, Iran. Corresponding author: e.zamanzade@sci.ui.ac.ir; ehsanzamanzadeh@yahoo.com}, \ \  and \ \ Sareh Goli \footnote{Department of Mathematical Sciences, Isfahan University of Technology,  Isfahan, 84105, Iran. s.goli@iut.ac.ir }}\\
}

\date {May 1, 2024}
\maketitle

\begin{abstract}
In this work, we discuss a general class of the estimators for the cumulative distribution function (CDF) based on judgment post stratification  (JPS)   sampling scheme which includes both empirical and kernel distribution functions.   Specifically, we obtain the expectation of the estimators in this class and show that they are asymptotically more efficient than their competitors in simple random sampling (SRS), as long as the rankings are better than random guessing. We find a mild condition that is necessary and sufficient for them to be asymptotically unbiased. We also prove that given the same condition, the estimators in this class are strongly uniformly consistent estimators of the true CDF,  and converge in distribution to a normal distribution when the sample size goes to infinity.
 We then focus on the kernel distribution function (KDF)  in the JPS design and  obtain the optimal bandwidth. We next carry out a comprehensive Monte Carlo simulation to compare the performance of the KDF in the JPS design for different choices of sample size, set size, ranking quality, parent distribution, kernel function as well as both perfect and imperfect rankings set-ups with its counterpart in SRS design. It is found that the JPS estimator dramatically improves the efficiency of the KDF as compared to its SRS competitor for a wide range of the settings. Finally, we apply the described procedure on a real dataset from medical context to show their usefulness and applicability in practice.

\end{abstract}

\noindent {\bf Keywords:} Nonparametric estimation; Monte Carlo simulation; Judgment ranking; Statistical inference \\
\noindent {\bf Mathematics Subject Classifications 2020:}  62D05; 62G05
\newtheorem{theorem}{Theorem}
\newtheorem{acknowledgement}[theorem]{Acknowledgement}
\newtheorem{algorithm}[theorem]{Algorithm}
\newtheorem{axiom}[theorem]{Axiom}
\newtheorem{case}[theorem]{Case}
\newtheorem{claim}[theorem]{Claim}
\newtheorem{conclusion}[theorem]{Conclusion}
\newtheorem{condition}[theorem]{Condition}
\newtheorem{conjecture}[theorem]{Conjecture}
\newtheorem{corollary}[theorem]{Corollary}
\newtheorem{criterion}[theorem]{Criterion}
\newtheorem{definition}{Definition}
\newtheorem{example}[theorem]{Example}
\newtheorem{exercise}[theorem]{Exercise}
\newtheorem{lemma}[theorem]{Lemma}
\newtheorem{notation}[theorem]{Notation}
\newtheorem{problem}[theorem]{Problem}
\newtheorem{proposition}[theorem]{Proposition}
\newtheorem{remark}{Remark} 
\newtheorem{solution}[theorem]{Solution}
\newtheorem{summary}[theorem]{Summary}

\newpage
\section{Introduction}\label{Sec1}
In many practical studies including ecological, medical,  and environmental researches, there are cases in which  accurate quantification of characteristic  of interest (X) for all units in a sample is difficult or  costly but  ranking them can be done easily or inexpensively.  In such situations, two alternatives sampling designs to simple random sampling (SRS) are ranked set sampling (RSS) and judgment post stratification  (JPS)  sampling scheme.

RSS was  firstly suggested  by  \citet{mcintyre1952method} in his effort  for efficient estimation of pasture mean  and forage yields. He noticed that although measuring a yield is costly and tedious since it requires harvesting the crops,  an agri-expert can  produce a good ranking of the yields by  eye inspection.
To obtain an RSS sample of size $n$, one first determines the value of $H$ and a vector $ \mathbf{N}=(N_1, \ldots, N_H)$,  where $H$  is called the set size and $ \mathbf{N}$ is called the vector of post strata sample sizes such that   $N_r$ represents the count of units with rank $r$ that need to be chosen for accurate quantification, ensuring that the sum of all $N_r$ values from $r=1$ to $H$ is equal to $n$.
He/She then draws an SRS sample of size $n\times H$ from the population of interest and randomly divides it into $n$   samples of size $H$. In the next step,  each SRS sample of size $H$ is ranked in an increasing magnitude without actual measurement.
Lastly, from the $N_r$ ranked samples of size $H$, units with \textsf{judgment rank} $r$ are marked for accurate quantification ($r=1,  \ldots, H$). If $ N_1= \ldots=N_H$, sampling scheme is called balanced RSS.
The term \textsf{judgment rank} is employed to emphasize that the ranking process relies on personal judgment, visual assessment, or an auxiliary variable that strongly correlates with the variable of interest.
 Hence, it may not be precise and has some errors, which is called imperfect ranking. The best situation is perfect  ranking, in which there is no error in the mechanism of ranking and thus judgment ranks coincide with the true ones.

After the introduction of RSS, it was studied in many researches.  \citet{takahasi1968unbiased}  showed that the RSS mean estimator is not biased and has no larger variance than the SRS mean estimator of comparable size. 
Estimation of variance for  RSS  was considered by  \citet{stokes1980estimation}, \citet{maceachern2002new} and \citet{frey2013variance}.
\citet{stokes1988characterization} and \citet{dumbgen2020inference} studied that the piece-wise linear estimation of the cumulative distribution function (CDF) in RSS.
\citet{gulati2004smooth}  and \citet{eftekharian2017estimating} developed  kernel-type estimators of the CDF and investigated their asymptotic properties. Also  RSS method has been utilized to address almost all standard statistical topics such as the two sample problems \citep{mahdizadeh2017reliability,mahdizadeh2018interval,mahdizadeh2021smooth,frey2019omnibus,moon2020empirical,zamanzade2020efficient},   the prevalence estimation \citep{alvandi2021estimation,frey2019improved,frey2021robust} and estimation in a parametric family \citep{qian2021parameter,he2020maximum,he2021modified,chen2018global,chen2021pareto}.

\citet{maceachern2004judgement}  proposed JPS sampling plan as  a more flexible and applicable version of  RSS.
To obtain a JPS  sample of size $n$ using the set size $H$,  an SRS sample of size $n$ is drawn and  $X$ values for all units are quantified.  Then, for each quantified unit,   another SRS sample of size $H-1$  is drawn to create $n$ sets of size $H$ in total.
In the next step, each set of size $H$ is ranked in an increasing magnitude without actual measurement. Finally, in each set, the rank of the quantified unit is noted. 
The JPS data, therefore, includes $n$  independent and identically distributed (i.i.d) pairs $\{(X_i, R_i), i = 1, \ldots, n\}$, in which $X_i$  denotes the quantified value for the $i$th sampled unit and $R_i$ is its judgment rank among  $H$ units in the set. Thus, a JPS is a sampling plan in which an SRS sample is supported with judgment ranks.

 While RSS  and JPS sampling share a common underlying concept, there are significant distinctions between the two methods.
The first one is about when the mechanism of ranking is carried out. In RSS, the ranking process is done \textsf{prior} to selecting the units  for their exact  measurements, so the ranks are strongly connected to the observations and cannot be ignored. Thus, it is not possible to utilize SRS  statistical methods  for RSS samples. In fact the RSS samples can only be analyzed by appropriate techniques that are specifically developed for them. 
But  in JPS  sampling scheme, the ranks are obtained \textsf{after} quantifying sample units and therefore they are loosely related to them. So, JPS  sampling  is more flexible than RSS from practical point of view, since the existing SRS techniques can be readily used for JPS samples  by ignoring the ranks information. It is useful  when it is believed that the ranking quality is not good
 or suitable statistical techniques to analyze the JPS sample  are not available.
Another  difference is about the vector of post strata sample sizes: while the vector $\mathbf{N}$  is determined before the RSS sample is obtained, it is a random vector in JPS  sampling. This induces an extra variability to the JPS sample, so the statistical inference using JPS is anticipated to be slightly less efficient than RSS.

Recently, many studies have been conducted in JPS  sampling. For instance,  \cite{wang2008nonparametric} and \cite{frey2012improved} studied estimation of the population mean in JPS. 
\cite{ozturk2012combining} provided an alternative  JPS sampling plan to combine the judgment ranks of different rankers.
\cite{wang2012isotonized}  developed isotonized estimators for the CDF. The problem of variance estimation has been considered by \citet{frey2013variance} and \citet{zamanzade2016isotonized}.
\cite{ozturk2015statistical} developed sign test and quantile estimators for a JPS sample.
\citet{dastbaravarde2016some} discussed some  properties of nonparametric estimation in the JPS setting.
Estimation of population proportion for a JPS  sample was investigated by \citet{zamanzade2017estimation}.
\cite{zamanzade2018some} developed perfect judgment ranking for JPS  sampling  scheme.
\cite{omidvar2018judgment} combined  the judgment ranks and obtained the maximum likelihood estimation of model parameters.
\cite{ozturk2021judgment} presented some alternative JPS estimators to combine rank information from different sources. \cite{dizicheh2021efficient} studied odds estimation in the JPS and
\cite{alvandi2023analysis}  discussed estimation of categorical ordinal populations for a JPS sample

One deficiency concerns with the empirical distribution function  (EDF) is that it provides a discrete estimate for a continuous smooth CDF $F$. Thus, it is not possible to use its derivative to draw statistical inference about any functional of the probability density function (PDF). Moreover, it also fails to properly 
estimate the  CDF beyond the extreme observations. To address these concerns,  many studies have been done based on SRS for smooth estimation of the CDF, including \citet{nadaraya1964some}, \citet{watson1964hazard}, \citet{winter1973strong} and  \citet{yamato1973uniform}.

To the best of our knowledge, the problem of smooth estimation of the CDF has not been addressed in the JPS setting, yet. In Section \ref{Sec2}, we introduce a class of  estimators for the CDF in the JPS setting in a general form, which contains both empirical and smooth ones, and we study some of  their finite sample size properties.  We also prove that their asymptotic variances are  no larger than their SRS competitors of the same size regardless of ranking quality.  In Section \ref{Sec3}, we obtain some asymptotic results for our proposed estimators. Specifically, we   find the condition for them to   be   unbiased  and  strongly uniformly consistent estimators of the true CDF. We also establish their asymptotic normality  under the same condition. In Section \ref{Sec4}, we obtain some bias corrected results for the smooth CDF estimators and discuss optimal bandwidth selection with respect to minimizing the mean square error (MSE) of the estimators. Section \ref{Sec5}, contains an extensive simulation results to compare the performance of the JPS and SRS based  CDF estimators. The application of the proposed procedure is illustrated in Section \ref{Sec6}. The discussion section, labeled as Section  \ref{Sec7}, includes some final remarks and outlines potential avenues for future research. 

\section{Introduction of the CDF Estimator}\label{Sec2}
Let $X_1, \ldots, X_n$ be an SRS sample of size $n$ from a population with an unknown CDF $F$. A general class of the CDF estimators of $F$ is given by
\begin{eqnarray}\label{Fnsrs}
F_{n;srs}\left(t\right)=\frac{1}{n}\sum \limits_{i=1}^n K_n\left(t-X_i\right),
\end{eqnarray}
where $K_n\left(.\right)$ is a given CDF. 

Clearly,  $F_{n;srs}\left(t\right)$  is a convex combination of the CDFs, and thus it itself is a CDF, as well. In some practical situations, the CDF $F(t)$ is a continuous function in $t$, so it is reasonable to utilize a continuous CDF as  $K_n\left(.\right)$. Specially, if it is assumed that the CDF $F(t)$ is an absolutely continuous function in $t$, then a CDF estimator, which enjoys the said property, can be obtained by taking an absolutely continuous  CDF for $K_n\left(.\right)$. Thus an estimator for the population PDF $f$ can be obtained as 
\begin{eqnarray*}
f_{n;srs}\left(t\right)=\frac{1}{n}\sum \limits_{i=1}^n k_n\left(t-X_i\right),
\end{eqnarray*}
where $k_n\left(.\right)$ is the first derivative of $K_n\left(.\right)$.

Obviously, the estimator $F_{n;srs}(t)$ coincides with the EDF $F_{n;srs}^*(t)=\frac{1}{n}\sum \limits_{i=1}^n \mathbb{I}\left(X_i\leq t\right)$ if $K_n(t)=e(t)$, where $e(t)=\mathbb{I}\left(t\geq 0 \right)$, and $\mathbb{I}\left(.\right)$ is the standard indicator function. 

Statistical properties of   $F_{n;srs}\left(t\right)$   have been discussed by \citet{yamato1973uniform}. He found the necessary and sufficient condition for the estimators in this class to be asymptotically unbiased, and proved their consistencies and normalities under the same condition. 

Let $\{(X_i,R_i), i=1, \ldots, n\}$ is a JPS sample of size $n$ from a population with an unknown CDF $F$. As mentioned before,  $R_i$, for  $i=1, \ldots, n$ is judgment rank of $X_i$  among  $H$ units in its set, so,  we set $\bold{R} =(R_1, \ldots, R_n)$ as the ranks vector.  The variable $I_{ir}$ is defined as follows: If observation $X_i$ has a judgment rank of $r$ (i.e., $R_i = r$), then $I_{ir} = 1$; otherwise, $I_{ir} = 0$. This definition applies to every $i$ ranging from $1$ to $n$ and every $r$ ranging from $1$ to $H$.
 With this definition, if $N_r$  denotes the number of  observations $X_i$ in the $r$th post strata, then we have $N_r=\sum \limits_{i=1}^{n}I_{ir}$, and the vector of post strata sample sizes $\bold{N} = (N_1, \ldots, N_H)$ follows a multinomial distribution  with mass parameter $n$ and probability vector $(\frac{{1}}{H}, \ldots, \frac{{1}}{H})$. Likewise, if we  define $J_r=1/N_r$  if $N_r> 0$, otherwise $J_r= 0$, for $r = 1, \ldots, H$, then  $d_n=\sum \limits_{r=1}^{H} \mathbb{I}(N_r>0)$ denotes the number of nonempty post strata which formed during  JPS sampling procedure.
 Furthermore, note  that the conditional distribution of each  $X_i$ given its rank $R_i=r$, denoted as $F_{[r]}$, is equivalent  to the distribution of the $r$th order statistic ($X_{[r]}$) from a sample of size $H$.
  Finally, it is worth mentioning that if the same ranking process is applied for all sets of size $H$ during the course of JPS, then the ranking mechanism is called \textsf{consistent}. According to \citet{presnell1999u}, if the ranking process is consistent, then  for all  $t \in \mathbb{R},$ we have 
$$
F(t)=\frac{1}{H}\sum \limits_{r=1}^H F_{[r]}(t).
$$
The above equality, which is known as the \textsf{fundamental equality}, plays an important role in establishing our results.

Using above notation, a general class of the CDF estimators in the JPS setting is proposed as
\begin{eqnarray} 
F_{n;jps}(t) = \sum\limits_{r = {1}}^H {{W_{r}}}  {{F}}_{n;[r]} (t),\label{eq1} 
\end{eqnarray} 
where $ W_r=\frac{\mathbb{I}(N_r>0)}{d_n}$, 
${{F}}_{n;[r]} (t)=\frac{1}{N_r}\sum \limits_{i=1}^n K_n\left(t-X_i\right)I_{ir}$ if $N_r>0$, and zero otherwise, where $K_n\left(.\right)$ is a given CDF. Similar to SRS, an absolutely continuous estimator of the CDF in the JPS setting can be obtained by an appropriate choice of $K_n\left(.\right)$. Furthermore, if we take $K_n\left(t\right)=e(t)$, then the EDF in the JPS setting is obtained, which has the following form
\begin{eqnarray} 
F_{n;jps}^*(t) = \sum\limits_{r = {1}}^H {{W_{r}}}  {{F}}_{n;[r]}^* (t),\label{eq1j} 
\end{eqnarray} 
where 
$
{{F}}_{n;[r]}^* (t)=
\begin{cases}
\frac{1}{N_r}\sum \limits_{i=1}^n \mathbb{I}\left(X_i\leq t\right)I_{ir}  &   \text{if} \ N_r>0,
\\
0 & \text{otherwise.}
\end{cases}
$

To study the properties of $F_{n;jps}(t)$, we first note that both $F_{n;jps}(t)$ and $F_{n;srs}(t)$ have a common amount of bias. To see this, note that 
\begin{align} 
\mathbb{E}(F_{n;jps}(t))&=\mathbb {E}\left(\sum \limits_{r=1}^H W_r{{F}}_{n;[r]} (t)\right) \nonumber \\
&=\mathbb {E}\left(\mathbb {E}\left(\sum \limits_{r=1}^H W_r{{F}}_{n;[r]} (t)  \mid {\textbf{R}}\right)\right) \nonumber \\
&=\mathbb {E}\left( \sum \limits_{r=1}^H W_r \mathbb {E}\left(K_n(t-X_{[r]}) \right)\right) \nonumber \\
&=\mathbb{E}(W_1) \sum \limits_{r=1}^H  \mathbb {E}\left(K_n(t-X_{[r]}) \right)\nonumber \\
&=\frac{1}{H} \sum \limits_{r=1}^H  \mathbb {E}\left(K_n(t-X_{[r]}) \right) \nonumber \\
&=\mathbb{E}(F_{n;srs}(t)), \label{eqexp}
\end{align}
in which the second last equality follows from the fact that $W_1, \ldots, W_H$ are identically distributed with  $\mathbb {E}(W_1)=\frac{1}{H}$, and the last equality is an immediate result of the fundamental equality. 

To obtain  the variance of $F_{n;jps}(t)$, by following lines of \citet{dastbaravarde2016some}, and relying on the conditional variance formula, one can obtain
\begin{align} 
\mathbb {V}(F_{n;jps}(t))&= &\nonumber \\
&  \mathbb {E}(W_1^2 J_1) \sum \limits_{r=1}^H  \mathbb {V}\left(K_n(t-X_{[r]})\right) + \frac{H}{H-1} \mathbb {V}(W_1) \left(\sum\limits_{r=1}^{H} {\left({\mathbb{E} \left(K_n(t-X_{[r]})\right) - \mathbb{E} \left(K_n(t-X)\right)}\right)}^2 \right).\label{eqvar1}
\end{align} 
Using the fundamental equality, one can show that
 $\mathbb {V}(K_n(t-X))=\frac{1}{H} \sum \limits_{r=1}^H  \mathbb {V}\left(K_n(t-X_{[r]})\right) +\frac{1}{H}    \sum\limits_{r=1}^{H} {\left({\mathbb{E} \left(K_n(t-X_{[r]})\right) - \mathbb{E} \left(K_n(t-X)\right)}\right)}^2$,
and therefore, the variance of $F_{n;jps}(t)$ can be also re-written as 
\begin{align} 
\mathbb {V}(F_{n;jps}(t))&=H \mathbb {E}(W_1^2 J_1)  \mathbb {V}\left(K_n(t-X)\right) \nonumber \\ 
&-\left[  \mathbb {E}(W_1^2 J_1)  - \frac{H}{H-1} \mathbb {V}(W_1)\right]\left(  \sum\limits_{r=1}^{H} {\left({\mathbb{E} \left(K_n(t-X_{[r]})\right) - \mathbb{E} \left(K_n(t-X)\right)}\right)}^2  \right).\label{eqvar2}
\end{align} 

The properties of the vector $\left(W_1, \ldots, W_H\right)$ have been investigated by \citet{dastbaravarde2016some}. Specifically, they showed that for $r=1, \ldots, H$
\begin{flalign} 
\mathbb {V}(W_r)&=\frac{1}{H^2}\sum\limits_{l=1}^{H-1} {(\frac{l}{H})}^{n-1},&& \label{eq6}
\end{flalign} 
\begin{flalign} 
\mathbb {E}(W_r^2 J_r) &= \frac{1}{H^n} \left[\frac{1}{n} + \sum\limits_{d_n=2}^{H} \sum\limits_{j=1}^{d_n-1}  \sum\limits_{n_1=1}^{n-d_n+1}  \frac{{(-1)}^{j-1}}{d_n^2 n_1} {H-1 \choose d_n-1}  {d_n-1 \choose j-1} {n \choose n_1} {(d_n-j)}^{n-n_1} \right].&& 
\label{eq7}
\end{flalign} 
Also  for every  fixed $H$, they have established a proof that as the sample size $n$ approaches infinity, then
\begin{flalign} 
n\mathbb {V}(W_r) \to 0,&& \label{eq8.1}
\end{flalign} 
\begin{flalign} 
nH \mathbb {E}(W_r^2 J_r) \to 1,&& \label{eq8.2}
\end{flalign} 
\begin{flalign} 
\sqrt n \left( W_r-\frac{1}{H}\right) \xrightarrow[]{\text{a.s}} {0},&& \label{eq9}
\end{flalign}
where $\xrightarrow[]{\text{a.s}}$ means the almost sure convergence. 
 
According to \eqref{eq8.1} and  \eqref{eq8.2}, we can write
$$\lim_{n \to \infty }\mathbb {V}\left(F_{n;jps}(t)\right) = \mathbb {V}\left(F_{n;srs}(t)\right)\left(1-\delta\right),$$
where $\delta=\frac{1}{H\mathbb{V}(K_n(t-X))} \sum\limits_{r=1}^{H} {\left({\mathbb{E} \left(K_n(t-X_{[r]})\right) - \mathbb{E} \left(K_n(t-X)\right)}\right)}^2 $ is less than one.
Thus, we can conclude that asymptotic variance of the CDF estimator in JPS setting  is no larger than its SRS competitor of size $n$, even when the ranking is not perfect but still  better than random guessing.

\section{Some Asymptotic Results}\label{Sec3}
In this section, we study the asymptotic properties of the CDF estimators in the proposed class, when the set size $H$ is fixed and the sample size $n$ tends to infinity. We prove Glivenko–Cantelli type convergence for the estimators  and establish the asymptotic distribution of them.
To do so, we rely on the following definition for convergence of a sequence of CDFs. 

\begin{definition}
Let $\{G_n\}$ be a sequence of CDFs and $G$ be a CDF. Then we  define the sequence $\{G_n\}$ weakly converges to $G$ and write 
 ${G_n} \xrightarrow[]{w} G$, if 
  $${G_n}(t) \to G(t),\ \ \    \ \  \ \text{as} \ \ \ n \rightarrow \infty,$$
  $\text{for all}\  t \in C(G)$, where $C(G)$ denotes the set of all continuity points of $G$.
\end{definition}

The first theorem concerns with asymptotic mean  of the JPS estimators of the CDF. Specifically, it finds necessary and sufficient condition for asymptotic unbiasedness of   $F_{n;jps}(t)$.  Theorem \ref{mean} can be simply proven using Lemma 1 in \citet{yamato1973uniform} and equation \eqref{eqexp}, and thus its proof is omitted.

\begin{theorem} \textbf{(Asymptotic unbiasedness)}\label{mean}
Let  $\{(X_i,R_i),  i={1},\dots, n\} $ be a JPS sample from a population with  an unknown CDF $F$. Then if the mechanism of ranking is consistent, and for a fixed value of $H$, every estimator of the form $F_{n;jps}(t)$ in \eqref{eq1} is asymptotically unbiased if and only if  ${K_n} \xrightarrow[]{w}  e$.
\end{theorem}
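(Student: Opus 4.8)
The plan is to reduce the statement entirely to the simple random sampling problem, for which the corresponding characterization is already available, and then transfer it back via the bias identity \eqref{eqexp}. The starting point is the observation, recorded in \eqref{eqexp}, that for a fixed $H$ the JPS estimator carries exactly the same bias as its SRS counterpart: for every $t\in\mathbb{R}$,
\begin{equation*}
\mathbb{E}\left(F_{n;jps}(t)\right)=\mathbb{E}\left(F_{n;srs}(t)\right)=\frac{1}{H}\sum_{r=1}^{H}\mathbb{E}\left(K_n(t-X_{[r]})\right)=\int K_n(t-x)\,dF(x),
\end{equation*}
where the final equality is the fundamental equality $F=\frac1H\sum_{r=1}^{H}F_{[r]}$. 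Consequently, asymptotic unbiasedness of $F_{n;jps}(t)$ at a continuity point $t$ of $F$, i.e.\ $\mathbb{E}(F_{n;jps}(t))\to F(t)$, is \emph{equivalent} to the convergence $\int K_n(t-x)\,dF(x)\to F(t)$. This latter statement no longer involves the ranks $R_i$ or the random weights $W_r$, and is precisely the object studied by \citet{yamato1973uniform}.

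For the sufficiency (``if'') direction I would assume $K_n\xrightarrow[]{w}e$ and fix a continuity point $t$ of $F$. Since the continuity points of the map $x\mapsto e(t-x)=\mathbb{I}(x\leq t)$ are exactly the points $x\neq t$, weak convergence gives $K_n(t-x)\to\mathbb{I}(x\leq t)$ for every $x\neq t$, and the single exceptional point $x=t$ carries no $F$-mass because $t\in C(F)$. As each $K_n$ is a CDF, the integrand is bounded by $1$, so dominated convergence yields $\int K_n(t-x)\,dF(x)\to\int\mathbb{I}(x\leq t)\,dF(x)=F(t)$, which by the first paragraph is the asymptotic unbiasedness of $F_{n;jps}(t)$. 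This is exactly the content of Lemma 1 of \citet{yamato1973uniform}, which I would cite rather than reprove.

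For the necessity (``only if'') direction I would again invoke Lemma 1 of \citet{yamato1973uniform}, which supplies the reverse implication directly; alternatively, exploiting that $F$ is arbitrary, one may specialize to a degenerate population with $F$ a point mass at some $a$, in which case the identity of the first paragraph collapses to $\mathbb{E}(F_{n;jps}(t))=K_n(t-a)$ while $F(t)=e(t-a)$, so asymptotic unbiasedness at every continuity point $t\neq a$ forces $K_n(s)\to e(s)$ for all $s\neq0$, namely $K_n\xrightarrow[]{w}e$. The main point to be careful about is bookkeeping rather than depth: one must keep the convergence statements indexed to continuity points of $F$ and track that the weak-convergence hypothesis is applied to the reflected step function $e(t-\cdot)$. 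The genuinely JPS-specific ingredient, the random post-strata weights $W_r$ together with their moment behavior, has already been fully absorbed into \eqref{eqexp}, so no additional obstacle arises beyond the SRS analysis; this is precisely why the proof may be left to the cited lemma.
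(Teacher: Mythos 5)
Your proposal is correct and follows exactly the route the paper intends: the paper omits the proof precisely because it reduces Theorem \ref{mean} to Lemma 1 of \citet{yamato1973uniform} via the bias identity \eqref{eqexp}, which is the reduction you carry out. Your added details (dominated convergence for sufficiency, the point-mass construction for necessity) merely flesh out the cited lemma and introduce nothing divergent from the paper's argument.
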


The next result concerns with Glivenko-Cantelli type convergence property of the  estimator $F_{n;jps}(t)$.
\begin{theorem} \textbf{(Strongly uniformly convergence)}
Let
 $\{(X_i,R_i),  i={1},\dots, n\} $
be a JPS sample   from a population with an unknown CDF $F$.
If the mechanism of ranking is consistent and ${K_n} \xrightarrow[]{w}  e$, then for a fixed value of $H$, we have
$$\sup_{t \in\mathbb{R} }\big \lvert  F_{n;jps}(t) - F(t) \big \rvert \xrightarrow[]{\text{a.s}} {0}.$$
\end{theorem}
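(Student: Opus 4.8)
My plan is to reduce the uniform convergence of $F_{n;jps}$ to three ingredients: the fundamental equality, the almost sure behaviour of the random weights recorded in \eqref{eq9}, and a within‑stratum Glivenko--Cantelli statement. I would first use $F=\frac1H\sum_{r=1}^H F_{[r]}$ together with the identity $\sum_{r=1}^H W_r=\frac{\sum_{r}\mathbb{I}(N_r>0)}{d_n}=1$ to split
\[
F_{n;jps}(t)-F(t)=\sum_{r=1}^H W_r\bigl(F_{n;[r]}(t)-F_{[r]}(t)\bigr)+\sum_{r=1}^H\Bigl(W_r-\tfrac1H\Bigr)F_{[r]}(t).
\]
Taking the supremum over $t$ and using $0\le F_{[r]}\le 1$, $W_r\ge 0$ and $\sum_r W_r=1$ yields
\[
\sup_{t\in\mathbb{R}}\bigl\lvert F_{n;jps}(t)-F(t)\bigr\rvert\le \max_{1\le r\le H}\ \sup_{t\in\mathbb{R}}\bigl\lvert F_{n;[r]}(t)-F_{[r]}(t)\bigr\rvert+\sum_{r=1}^H\Bigl\lvert W_r-\tfrac1H\Bigr\rvert .
\]
Since $H$ is fixed, it suffices to show that each of these finitely many terms vanishes a.s. The weight term is immediate from \eqref{eq9}, which gives $W_r\xrightarrow[]{\text{a.s}}\frac1H$ for each $r$ and hence $\sum_{r=1}^H\lvert W_r-\frac1H\rvert\xrightarrow[]{\text{a.s}}0$; indeed, because each $N_r\to\infty$ a.s. by the strong law for the multinomial counts, almost surely all strata are eventually nonempty, so $d_n=H$ and $W_r=\frac1H$ exactly for all large $n$.

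The heart of the matter is the within‑stratum statement $\sup_t\lvert F_{n;[r]}(t)-F_{[r]}(t)\rvert\xrightarrow[]{\text{a.s}}0$. Here I would first record the structural fact that the observations carrying judgment rank $r$, listed in order of their index, form an i.i.d.\ sequence $Y_1^{(r)},Y_2^{(r)},\dots$ from $F_{[r]}$ that is independent of the rank vector $\mathbf R$: this is the usual thinning of the i.i.d.\ pairs $(X_i,R_i)$ by the event $\{R_i=r\}$, since conditionally on $\mathbf R$ these observations are independent with the rank‑free law $F_{[r]}$, which does not depend on $\mathbf R$. With this notation $F_{n;[r]}(t)=\frac{1}{N_r}\sum_{j=1}^{N_r}K_n(t-Y_j^{(r)})$ is a smoothed empirical distribution of $N_r$ i.i.d.\ draws from $F_{[r]}$, and $N_r\to\infty$ a.s.

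I would then write $F_{n;[r]}(t)-F_{[r]}(t)$ as the sum of a smoothing part $\frac{1}{N_r}\sum_{j=1}^{N_r}\bigl[K_n(t-Y_j^{(r)})-e(t-Y_j^{(r)})\bigr]$ and an ordinary empirical part $\frac{1}{N_r}\sum_{j=1}^{N_r}e(t-Y_j^{(r)})-F_{[r]}(t)$. The empirical part tends to $0$ uniformly a.s.\ by the classical Glivenko--Cantelli theorem for the sequence $(Y_j^{(r)})$ composed with $N_r\to\infty$, while the smoothing part is governed by the hypothesis $K_n\xrightarrow[]{w}e$ exactly as in Lemma~1 of \citet{yamato1973uniform}; together these give the uniform SRS consistency result of \citet{yamato1973uniform} transported to stratum $r$, which I would invoke there.

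The step I expect to be the main obstacle is that the stratum sample size $N_r$ is random and, crucially, does not equal the index $n$ of the smoothing CDF $K_n$, so Yamato's theorem (whose smoothing index is tied to the sample size) cannot be quoted verbatim. I would resolve this using the decoupling above: because the subsample $(Y_j^{(r)})$ is i.i.d.\ and independent of $N_r$, on the full‑probability event where both the Glivenko--Cantelli convergence of $(Y_j^{(r)})$ and $N_r\to\infty$ hold the random‑index empirical convergence follows; and since $K_n\xrightarrow[]{w}e$ is a property of the sequence $K_n$ alone, with $n$ and $N_r$ tending to infinity together ($N_r/n\to\frac1H$), the smoothing part vanishes uniformly in $t$ despite the index mismatch. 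Inserting the three convergences into the displayed bound then gives $\sup_{t}\lvert F_{n;jps}(t)-F(t)\rvert\xrightarrow[]{\text{a.s}}0$.
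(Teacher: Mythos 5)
Your proof is correct and follows essentially the same route as the paper: the same triangle-inequality decomposition into a weighted within-stratum term plus a weight-deviation term, with the weight term killed by \eqref{eq9} and the within-stratum term reduced to a Yamato-type uniform consistency statement for each $F_{n;[r]}$. The one genuine difference is that you notice, and repair, a step the paper passes over silently: the paper invokes Theorem 3 of \citet{yamato1973uniform} verbatim to get $\sup_t\lvert F_{n;[r]}(t)-F_{[r]}(t)\rvert\xrightarrow[]{\text{a.s}}0$, even though $F_{n;[r]}$ averages a \emph{random} number $N_r$ of observations while the smoothing CDF carries the index $n$, so Yamato's result does not apply off the shelf. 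Your thinning argument (the rank-$r$ subsample is i.i.d.\ from $F_{[r]}$), combined with the split into an empirical part handled by Glivenko--Cantelli along the random index $N_r\to\infty$ a.s.\ and a smoothing part controlled by $K_n\xrightarrow[]{w}e$ alone, closes this gap cleanly; the decoupling is perhaps most transparent via the convolution identity $F_{n;[r]}(t)=\int F^{*}_{n;[r]}(t-u)\,dK_n(u)$, which gives $\sup_t\lvert F_{n;[r]}(t)-F_{[r]}(t)\rvert\le\sup_s\lvert F^{*}_{n;[r]}(s)-F_{[r]}(s)\rvert+\sup_t\bigl\lvert\int\bigl(F_{[r]}(t-u)-F_{[r]}(t)\bigr)dK_n(u)\bigr\rvert$, separating the sample-size index from the smoothing index exactly as you intend. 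Two small points: your $\max_{1\le r\le H}$ bound should be restricted to strata with $W_r>0$ (harmless, since a.s.\ all strata are eventually nonempty, as you note); and, like the paper, you tacitly need continuity of $F$ (hence of each $F_{[r]}$) for the uniform smoothing bias to vanish --- a continuous $K_n$ makes $F_{n;[r]}$ continuous, so uniform convergence to a discontinuous $F_{[r]}$ would be impossible --- so neither write-up is worse off on this score.
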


\begin{proof}
From fundamental equality, we can write
\begin{align}
\sup_{t \in\mathbb{R} }\big \lvert  F_{n;jps}(t) - F(t) \big \rvert &= \sup_{t \in\mathbb{R} }\big \lvert  \sum\limits_{r = {1}}^H {{W_{r}}}  F_{n;[r]}(t) - \frac{1}{H} \sum\limits_{r = {1}}^H {{F_{{[r]}}}(t) } \big \rvert  \nonumber \\
&\le \sum\limits_{r = {1}}^H \sup_{t \in\mathbb{R} } \big \lvert {{W_{r}}}  F_{n;[r]}(t) - \frac{1}{H} {{F_{{[r]}}}(t) } \big \rvert, \label{eq10}
\end{align}
where the last inequality follows from  triangle inequality. Besides, for every
 $r={1}, \ldots, H$, we have
\begin{align*}
\sup_{t \in\mathbb{R} } \big \lvert {{W_{r}}}  F_{n;[r]}(t) - \frac{1}{H} {{F_{{[r]}}}(t) } \big \rvert &\le \sup_{t \in\mathbb{R} } \big \lvert {{W_{r}}}  F_{n;[r]}(t) - \frac{1}{H} {{ F_{n;{[r]}}}(t) }\big \rvert  + \sup_{t \in\mathbb{R} } \big \lvert {{ \frac{1}{H} }}  F_{n;[r]}(t) - \frac{1}{H} {{F_{{[r]}}}(t) } \big \rvert    \nonumber \\
&=\sup_{t \in\mathbb{R} } \left(\big \lvert {{W_{r}}-\frac{1}{H} } \big \rvert  F_{n;[r]}(t) \right) +   \sup_{t \in\mathbb{R} } \left(\frac{1}{H}\big \lvert   F_{n;[r]}(t) - F_{[r]}(t)  \big \rvert \right) \nonumber \\
&=\big \lvert {{W_{r}}-\frac{1}{H} } \big \rvert   \sup_{t \in\mathbb{R} }   F_{n;[r]}(t) +  \frac{1}{H}  \sup_{t \in\mathbb{R} }\big \lvert   F_{n;[r]}(t)- F_{[r]}(t)  \big \rvert. \nonumber 
\end{align*}
It is clear from equation \eqref{eq9} we have 
$ \left( {{W_{r}} - \frac{{1}}{H}} \right) \xrightarrow[]{\text{a.s}}{0}$. Furthermore, 
from Theorem 3 in \citet{yamato1973uniform}, one can conclude that
 $$\sup_{t \in\mathbb{R} } \big \lvert   F_{n;[r]}(t)- F_{[r]}(t)  \big \rvert\xrightarrow[]{\text{a.s}}{0},$$
for $r=1, \ldots, H$, and this completes the proof.
\end{proof}

We now establish the asymptotic normality of the JPS estimator. 
\begin{theorem}\textbf{(Asymptotic normality)} 
Let
 $\{({X_i},{R_i}),i={1},\dots, n \}$ be a JPS   sample  from a population with an unknown CDF $F$.  If the mechanism of ranking is  consistent and ${K_n} \xrightarrow[]{w}  e$, then for a fixed value of $H$, we have
 $$\sqrt{n} \left( F_{n;jps}(t)-F(t)\right)\xrightarrow[]{\text{d}} N\left( {{0},\frac{{1}}{H}\sum\limits_{r = {1}}^H {{F_{[r]}}(t)[{1} - } {F_{[r]}}(t)]} \right),$$
 for all $t\in C(F)$ with $F(t)\ne 0$ or $1$, where $\xrightarrow[]{\text{d}}$ means convergence in distribution.
 \end{theorem}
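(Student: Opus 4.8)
The plan is to center through the fundamental equality and then separate the randomness of the post-strata weights $W_r$ from that of the within-stratum estimators $F_{n;[r]}(t)$. Writing $F(t)=\tfrac1H\sum_{r=1}^H F_{[r]}(t)$ and $F_{n;jps}(t)=\sum_{r=1}^H W_r F_{n;[r]}(t)$, I would decompose
\[
\sqrt n\bigl(F_{n;jps}(t)-F(t)\bigr)=\underbrace{\sum_{r=1}^H W_r\,\sqrt n\bigl(F_{n;[r]}(t)-F_{[r]}(t)\bigr)}_{A_n}+\underbrace{\sqrt n\sum_{r=1}^H\Bigl(W_r-\tfrac1H\Bigr)F_{[r]}(t)}_{B_n}.
\]
By \eqref{eq9}, $\sqrt n\,(W_r-\tfrac1H)\xrightarrow[]{\text{a.s}}0$ for each $r$, and since $H$ is fixed and $0\le F_{[r]}(t)\le1$, the finite sum satisfies $B_n\xrightarrow[]{\text{a.s}}0$; hence $B_n$ is asymptotically negligible and it suffices to find the limit of $A_n$. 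Again by \eqref{eq9} we have $W_r\xrightarrow[]{\text{a.s}}\tfrac1H$, so by Slutsky's theorem $A_n$ has the same weak limit as $\widetilde A_n=\tfrac1H\sum_{r=1}^H\sqrt n\bigl(F_{n;[r]}(t)-F_{[r]}(t)\bigr)$, provided the vector $\bigl(\sqrt n(F_{n;[r]}(t)-F_{[r]}(t))\bigr)_{r=1}^H$ converges jointly in distribution.

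The core of the argument is to establish this joint convergence by conditioning on the rank vector $\mathbf R$ (equivalently on $\mathbf N$). Given $\mathbf N=(n_1,\dots,n_H)$, the JPS structure makes the strata mutually independent, and within stratum $r$ the contributing observations are i.i.d.\ from $F_{[r]}$, so $F_{n;[r]}(t)$ is a Yamato-type smoothed estimator based on a sample of random size $N_r$. Using the strong law for the multinomial counts, $N_r/n\xrightarrow[]{\text{a.s}}\tfrac1H$ and $N_r\to\infty$ almost surely, I would apply a triangular-array Lindeberg central limit theorem to $\{K_n(t-Y_{rj})\}_{j=1}^{N_r}$: the summands lie in $[0,1]$, so the Lindeberg condition holds automatically, while $K_n\xrightarrow[]{w}e$ forces the per-observation variance $\mathbb V(K_n(t-X_{[r]}))$ to converge to $F_{[r]}(t)(1-F_{[r]}(t))$ (here one uses that $t\in C(F)$ implies $t\in C(F_{[r]})$ for every $r$, since $F$ is the average of the monotone $F_{[r]}$). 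Combined with $\sqrt{n/N_r}\xrightarrow[]{\text{a.s}}\sqrt H$, this yields, conditionally and then unconditionally via bounded convergence of the conditional characteristic functions over the almost-sure event $\{N_r/n\to\tfrac1H\}$, that each coordinate converges to $N\bigl(0,H\,F_{[r]}(t)(1-F_{[r]}(t))\bigr)$ and that the coordinates are asymptotically independent. Feeding this into $\widetilde A_n$ gives the variance $\tfrac1{H^2}\sum_{r=1}^H H\,F_{[r]}(t)(1-F_{[r]}(t))=\tfrac1H\sum_{r=1}^H F_{[r]}(t)(1-F_{[r]}(t))$, as claimed; the hypothesis $0<F(t)<1$ is exactly what guarantees this limiting variance is positive, so the limit is non-degenerate.

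The main obstacle I anticipate is the triangular-array nature of the within-stratum sums together with the centering at $F_{[r]}(t)$ rather than at $\mathbb E[K_n(t-X_{[r]})]$. Writing $K_n(t-Y_{rj})-F_{[r]}(t)$ as a mean-zero part plus the deterministic bias $\mathbb E[K_n(t-X_{[r]})]-F_{[r]}(t)$, the mean-zero part is controlled by the Lindeberg CLT above, but the bias contributes a term of order $\sqrt n\,\bigl(\mathbb E[K_n(t-X_{[r]})]-F_{[r]}(t)\bigr)$ that must vanish. The condition $K_n\xrightarrow[]{w}e$ only forces this bias to tend to zero, not to be $o(n^{-1/2})$, so I expect the rigorous argument to invoke precisely the SRS normality result of \citet{yamato1973uniform}, whose hypotheses encode the required bias control, applied within each stratum, rather than re-deriving it from scratch. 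Converting that fixed-sample-size statement into one with the random index $N_r$ (for instance via Anscombe's theorem) and then recombining the conditionally independent strata is the remaining technical work.
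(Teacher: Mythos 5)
Your proposal is correct and takes essentially the same route as the paper: the same decomposition with the weight-fluctuation term $B_n$ killed via \eqref{eq9}, conditional independence of the strata given $\mathbf{R}$, the per-stratum limit imported from Theorem 1 of \citet{yamato1973uniform} (exactly as you predicted for handling the bias centering, rather than a from-scratch Lindeberg argument), and Slutsky's theorem to recombine. The only cosmetic difference is that the paper factors the main term as $\mathbf{A_n}\mathbf{T_n}$ with $\sqrt{N_r}$-scaled coordinates and uses $nJ_r \xrightarrow[]{\text{a.s}} H$, where you scale by $\sqrt{n}$ directly and invoke Anscombe-type reasoning for the random index.
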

 
 \begin{proof}
We can write
\begin{align*}
\sqrt{n} \left( F_{n;jps}(t)-F(t)\right)&=\sqrt{n}\left( \sum\limits_{r = {1}}^H {{W_r}}  F_{n;[r]}(t)-F(t) \right) \nonumber \\
&=\sqrt{n} \left(\sum\limits_{r = {1}}^H {W_r \left( F_{n;[r]}(t)-F_{[r]}(t) \right)} \right)
+ \sqrt{n} \left(\sum\limits_{r = {1}}^H {F_{[r]}(t)}\left({W_r}-\frac{{1}}{H}\right)\right).
\end{align*}
From \eqref{eq9}, it is clear that $\sqrt{n} \left(\sum\limits_{r = {1}}^H {F_{[r]}(t)} \left({W_r} - \frac{{1}}{H}\right)\right)\xrightarrow[]{\text{a.s}}{0}$, thus it is sufficient to obtain asymptotic distribution of 
$\sqrt{n} \left(\sum\limits_{r = {1}}^H {W_r \left( F_{n;[r]}(t) - F_{[r]}(t) \right)} \right)$.
For this purpose, we define
$$\sqrt{n} \left( \sum\limits_{r = {1}}^H {W_r \left( F_{n;[r]}(t)-F_{[r]}(t)\right)}\right) = \mathbf{A_n T_n},$$
 where  $\mathbf{A_n} =\sqrt{n}\left(\sqrt{J_1}W_1, \ldots, \sqrt{J_H}W_H\right) $
and $$\mathbf{T_n}={\left(\sqrt{N_1}\left(  F_{n;[1]}(t)-F_{[1]}(t)\right), \ldots, \sqrt{N_H}\left( F_{n;[H]}(t)-F_{[H]}(t)\right)\right)}^T,$$
  where $T$ denotes the transpose operator. Also from \eqref{eq9}, we have $ \left( {{W_r} - \frac{{1}}{H}} \right) \xrightarrow[]{\text{a.s}}{0}$ and  using law of large numbers, we obtain $nJ_r\xrightarrow[]{\text{a.s}}{H}$. Thus $\mathbf{A_n} \xrightarrow[]{\text{a.s}} {\left(\sqrt{\frac{1}{H}},\ldots,\sqrt{\frac{1}{H}}\right)}$.
\\
On the other hand, since $ F_{n;[r]}(t)$s,  given the vector of ranks $\textbf{R}$, are conditional independent random variables, we can write
\begin{align*}
P(\displaystyle  \bigcap _{r=1}^{H} \sqrt{N_r} \left( F_{n;[r]}(t)-F_{[r]}(t))\le t_r \right) 
&=E\left[P(\displaystyle  \bigcap _{r=1}^{H} \sqrt{N_r} \left( F_{n;[r]}(t)-F_{[r]}(t)\right)\le t_r \mid \textbf{R}) \right]\nonumber \\
&=E\left(\prod_{r=1}^H P\left( \sqrt{N_r}( F_{n;[r]}(t)-F_{[r]}(t))\le t_r \mid \textbf {R}\right) \right)\nonumber \\
&\xrightarrow[]{} E\left(\prod_{r=1}^H P\left( T_r \le t_r\right)\right)\nonumber \\
&=P\left(\displaystyle  \bigcap _{r=1}^{H} T_r \le t_r\right).
\end{align*}
From Theorem 1 in \citet{yamato1973uniform}, one can conclude that if ${K_n} \xrightarrow[]{w}  e$, then $T_r =\sqrt{n_r}\left(F_{n;[r]}(t)-F_{[r]}(t)\right)$  follows a mean-zero normal distribution with variance $F_{[r]}(t)\left[1-F_{[r]}(t)\right]$, for all $t\in C(F_{[r]})$ with $F_{[r]}(t)\ne 0$ or $1$.

Thus, we can conclude that the vector $\mathbf{T_n}$  converges in distribution to an H-dimensional normal distribution with zero mean vector and
variance covariance matrix $\pmb{\sum}$, where $\pmb{\sum}$  is a diagonal matrix that its $r$th diagonal element is given by $F_{[r]}(t)\left[1-F_{[r]}(t)\right]$ for $r=1,\ldots, H$. So, using Slutsky’s theorem, we have
$$\sqrt{n}\left(\sum\limits_{r = {1}}^H {W_r \left( F_{n;[r]}(t)-F_{[r]}(t)\right)}\right) =\mathbf{A_n T_n} \xrightarrow[]{\text{d}} N\left({0},\frac{{1}}{H}\sum\limits_{r = {1}}^H {{F_{[r]}}(t)\left[1 - {F_{[r]}}(t)\right]} \right).  $$
 \end{proof}

\section{CDF Estimation using Kernel Function}\label{Sec4}
Up to now, we assume that the function $K_n\left(.\right)$ is an arbitrary CDF and all asymptotic results obtained under the mild condition that $K_n \xrightarrow[]{w} e$. However, in the statistical literature, many researchers prefer to impose some more constraints on $K_n\left(.\right)$,  such that the resulting CDF estimator enjoys  some nice properties like smoothness and having less bias. In this section, we impose such constraints on proposed CDF estimator in the JPS setting and then we obtain the optimal estimator in this case. In so doing, we assume that the  CDF $K_n\left(t\right)$ depends on the sample size $n$ via the parameter $h_n$, and therefore, it can be written as $K_n\left(t\right)=K\left(\frac{t}{h_n}\right)$, where $K\left(.\right)$ is a given CDF and the parameter $h_n$ is an smoothing parameter and is called \textsf{bandwidth}. We  assume that the given CDF function $K\left(.\right)$ can be written as
\begin{align}\label{K.CDF}
K(t)=
\begin{cases}
0 &   t<-a,
\\
\int_{-a}^{t} k(x)dx & \lvert t  \rvert \le a,
\\
1 & t>a,
\end{cases}
\end{align} 
such that
$k(.)$ is a symmetric PDF with a bounded support on $\left[-a, a\right]$, and thus it satisfies  the following constraints
\begin{align} \label{ker.pdf}
\int_{-a}^{a} k(x)dx =1,\ \ \ \ 
\int_{-a}^{a} xk(x)dx =0,\ \ \ \ 
\int_{-a}^{a} x^2 k(x)dx \ne 0.   
\end{align} 
It is also assumed that the bandwidth  $h_n$ follows the conditions
\begin{eqnarray*}
\lim_{n \to \infty} h_n=0,\ \ \ \ \lim_{n \to \infty} nh_n=\infty,
\end{eqnarray*}
indicating that it converges to zero but  at a rate slower than $n^{-1}$. To make the notations less cumbersome, hereafter, we omit the subscript $n$ in $h_n$.  In the statistical literature, the function  $k\left(.\right)$ satisfying \eqref{ker.pdf} is usually known as \textsf{kernel} function and the method for estimating CDF (PDF) based on the function $k\left(.\right)$ is called kernel CDF (PDF) estimation. Therefore, the kernel distribution function (KDF) based on an SRS sample is obtained by replacing  \eqref{K.CDF} in equation  \eqref{Fnsrs} as 
\begin{eqnarray}
F_{n;srs}^k\left(t\right)=\frac{1}{n}\sum\limits_{i=1}^n K\left(\dfrac{t-X_i}{h}\right).
\end{eqnarray}

\citet{azzalini1981note}   obtained the optimal value for bandwidth $h$ based on an SRS sample of size $n$ by minimizing large sample MSE of $F_{n;srs}^k\left(t\right)$ as 
\begin{eqnarray*}
h_{srs}={\left(\frac{f(t)\left(a-\int_{-a}^{a} K^2(x)dx\right)}{{n\left( f^{\prime}(t) \int_{-a}^{a} x^2k(x)dx\right)}^2}\right)}^{\tfrac{1}{3}}.
 \label{hoptsrs}
\end{eqnarray*}

Let $\{(X_i,R_i), i=1, \ldots, n\}$ is a JPS sample of size $n$ from a population with an unknown CDF $F$. The kernel-type  estimator of CDF based on the JPS sample can be obtained by substituting \eqref{K.CDF} in equation \eqref{eq1}, and  is given by
\begin{eqnarray*}
F^{k}_{n;jps}\left(t\right)=\sum \limits_{r=1}^H W_r F^{k}_{n;[r]} (t),
\end{eqnarray*}
where ${F}^{k}_{n;[r]} (t)=\frac{1}{N_r}\sum \limits_{i=1}^n K\left(\frac{t-X_i}{h}\right) I_{ir}$ if $N_r>0$, and zero otherwise.

It is worth mentioning that the asymptotic results established in Section \ref{Sec3} hold for kernel-type estimators of the CDF, since ${K} \xrightarrow[]{w}  e$ as the sample size $n$ goes to infinity.

To obtain the mean of a KDF in the JPS setting,  we first note that  according to \eqref{eqexp}, one can write
\begin{align} 
\mathbb{E}\left(F^k_{n;jps}(t)\right) &=  \mathbb {E} \left(K\left(\frac{t-X}{h}\right) \right) \nonumber \\
&=\int_{-\infty}^{\infty} K\left(\frac{t-x}{h}\right)f(x)dx \nonumber \\
&=\int_{-\infty}^{t-ah} f(x)dx + \int_{t-ah}^{t+ah} K\left(\frac{t-x}{h}\right)f(x)dx.\nonumber
\end{align}
Note that since $k(.)$ is symmetric around 0, we can write $K(x)=\frac{1}{2} + g(x);\ \lvert x  \rvert \le a $, where g(.) is an odd function, and therefore we have
\begin{align} 
\mathbb{E}\left(F^k_{n;jps}(t)\right) &= \int_{-\infty}^{t-ah} f(x)dx  + \int_{t-ah}^{t+ah} \left(\frac{1}{2}+g\left(\frac{t-x}{h}\right)\right)f(x)dx\nonumber \\
&=\frac{1}{2} \left(F(t+ah)+F(t-ah)\right) +  \int_{t-ah}^{t+ah} g\left(\frac{t-x}{h}\right)f(x)dx \nonumber \\
&=\frac{1}{2} \left(F(t+ah)+F(t-ah)\right) + h  \int_{-a}^{a} g\left(u\right)f(t-hu)du.\nonumber
\end{align}
Using Taylor series of $F(t+ah)$, $F(t-ah)$ and $f(t-hu)$ around $t$, we have
\begin{align} 
\mathbb{E}\left(F^k_{n;jps}(t)\right) &= F(t)+\frac{a^2h^2}{2} f^\prime (t)+hf(t)\int_{-a}^{a} g(u)du-h^2f^\prime (t) \int_{-a}^{a} ug(u)du + O(h^3)\nonumber \\
&= F(t)+\frac{a^2h^2}{2} f^\prime (t)-h^2f^\prime (t) \int_{-a}^{a} ug(u)du + O(h^3)\nonumber \\
&=F(t)+h^2 f^\prime (t) \left(\frac{a^2}{2}-\int_{-a}^{a} ug(u)du\right)+ O(h^3),\nonumber 
\end{align}
where the second last equality follows from the fact that $g(.)$ is an odd function.

By using integration by parts, we can write
$$\int_{-a}^{a} ug(u)du =\frac{a^2}{2}-\frac{1}{2} \int_{-a}^{a} u^2k(u)du.$$
Therefore, we have
\begin{align} 
\mathbb{E}\left(F^k_{n;jps}(t)\right) &=F(t)+\frac{h^2}{2} f^\prime (t) \int_{-a}^{a} u^2k(u)du +O(h^3)\label{expkerO3} \\
&=F(t)+O(h^2).\label{expkerO2}
\end{align}
To obtain the variance of $F^k_{n;jps}(t)$, we first require to calculate $\mathbb{V}\left(K\left(\frac{t-X}{h}\right) \right) $. To do so, note that 
\begin{align} 
 \mathbb {E} \left(K^2\left(\frac{t-X}{h}\right) \right) &=\int_{-\infty}^{\infty} K^2\left(\frac{t-x}{h}\right)f(x)dx \nonumber \\
&=\int_{-\infty}^{t-ah} f(x)dx + \int_{t-ah}^{t+ah} K^2\left(\frac{t-x}{h}\right)f(x)dx \nonumber \\
&= \int_{-\infty}^{t-ah} f(x)dx  + \int_{t-ah}^{t+ah} \left(\frac{1}{4}+g\left(\frac{t-x}{h}\right)+g^2\left(\frac{t-x}{h}\right)\right)f(x)dx\nonumber \\
&=\frac{3}{4} F(t-ah)+\frac{1}{4}F(t+ah) +  h \int_{-a}^{a} g\left(u\right)f(t-hu)du +  h\int_{-a}^{a} g^2\left(u\right)f(t-hu)du\nonumber \\
&=F(t)-\frac{ah}{2}f(t)+hf(t) \int_{-a}^{a} g^2\left(u\right)du -h^2f^\prime(t)\int_{-a}^{a} ug^2\left(u\right)du+O(h^2)\nonumber \\
&=F(t)-\frac{ah}{2}f(t)+hf(t) \int_{-a}^{a} g^2\left(u\right)du +O(h^2)\nonumber \\
&=F(t)-hf(t)\left(\frac{a}{2}-\int_{-a}^{a}g^2(u)du \right)+O(h^2).\label{exp2ker}
\end{align}
So, using \eqref{expkerO2} and \eqref{exp2ker}, one can write
\begin{align} 
 \mathbb {V} \left(K\left(\frac{t-X}{h}\right) \right) &=  \mathbb {E} \left(K^2\left(\frac{t-X}{h}\right) \right)  -  {\mathbb {E}^2 \left(K\left(\frac{t-X}{h}\right) \right) }\nonumber \\
 &=F(t) \left[1-F(t)\right]-hf(t)\left(\frac{a}{2}-\int_{-a}^{a} g^2(u) du \right)+O(h^2)\nonumber \\
 &=F(t) \left[1-F(t)\right]-hf(t)\left(a-\int_{-a}^{a} K^2(u) du \right)+O(h^2),\nonumber
\end{align} 
where the last equality holds owing to $\int_{-a}^{a} g^2(u) du=\int_{-a}^{a} K^2(u) du-\frac{a}{2}$.
Thus, using the relation \eqref{eqvar2}, the MSE of $F^k_{n;jps}(t)$ is obtained as

\begin{align} 
MSE \left( F^k_{n;jps}(t)\right)&=\mathbb {V}(F^k_{n;jps}(t))+\text{bias}^2 (F^k_{n;jps}(t)) \nonumber \\ 
&\approx H \mathbb {E}(W_1^2 J_1) \left(F(t) \left[1-F(t)\right]-hf(t)\left(a-\int_{-a}^{a} K^2(u) du \right)\right) - r(t)  
+\frac{h^4}{4}  {\left(f^\prime (t)\int_{-a}^{a} u^2k(u)du \right)}^2,\nonumber 
\end{align}
where
$r(t)$ denotes the remaining terms which do not depend on the bandwidth $h$. Therefore, the optimal bandwidth $h$ in the JPS setting can be obtained by minimizing the MSE of $F^k_{n;jps}(t)$ with respect to $h$ as 
\begin{align} 
h_{jps}&={\left(\frac{nH\mathbb {E}(W_1^2 J_1) f(t)\left(a-\int_{-a}^{a} K^2(x) dx\right)}{n {\left( f^\prime (t)\int_{-a}^{a} x^2k(x)dx\right)}^2}\right)}^{\frac{1}{3}} \nonumber \\ 
&={\left(nH\mathbb {E}(W_1^2 J_1)\right)}^{\frac{1}{3}} h_{srs}. \label{hoptjps} \nonumber 
\end{align} 

\begin{remark}
It is clear from equation \eqref{eq8.2} that $nH\mathbb {E}(W_1^2 J_1)\rightarrow 1,$ as $n\rightarrow \infty$, and therefore, the optimal bandwidth $h$ in the JPS setting is almost the same as the SRS setting for large values of $n$.
\end{remark}

\begin{remark}
 We observe that the optimal bandwidth $h$ depends on  quantities $f(t)$ and $f^\prime (t)$ which are often unknown in practice.  We estimate these quantities using \textsf{reference underlying distribution} for these quantities \citep{silverman2018density}. To do so, we take reference distribution as exponential distribution with mean $\bar{X}$ for the situations in which the variable of interest has a positive support such as lifetime data. If the variable of interest has support on real numbers $\mathbb{R}$, then normal distribution with mean $\bar{X}$ and variance $S^2=\frac{1}{n}\sum\limits_{i=1}^n \left(X_i-\bar{X}\right)^2$ is considered as the reference distribution.
\end{remark}

 It is noticeable that  we can obtain kernel-type estimator of PDF  $f$,  from smooth estimator $F^{k}_{n;jps}(t)$  as follows
\begin{eqnarray*}
f^{k}_{n;jps}\left(t\right)= \frac{d}{dt} F^{k}_{n;jps}(t)=\sum \limits_{r=1}^H W_r f^{k}_{n;[r]} (t),
\end{eqnarray*}
where ${f}^{k}_{n;[r]} (t)=\frac{1}{N_r}\sum \limits_{i=1}^n k\left(\frac{t-X_i}{h}\right) I_{ir}$ if $N_r>0$, and zero otherwise.

\section{Monte Carlo Simulation}\label{Sec5}
In this section, we conduct a comprehensive Monte Carlo simulation to compare the performance of the  KDF estimator based on SRS with its JPS sampling counterpart  introduced in section \ref{Sec4},  for both perfect and imperfect ranking set-ups. To this end, we have considered six different distributions as the parent distribution: standard normal distribution ($N(0,1)$), student's t- distribution with 5 degrees of freedom ($t_5$), standard laplace distribution ($La(0,1)$), standard exponential distribution ($E(1)$), gamma distribution with scale parameter 1 and shape parameter 0.5 ($G(0.5, 1)$), and gamma distribution with scale parameter 1 and shape parameter 2 ($G(2, 1)$). Therefore, both distributions with support on the real numbers $\mathbb{R}$ and lifetime distributions with support on $\left(0,+\infty \right)$ are considered in our study.
 
   In this simulation study, we set  $n \in \lbrace 10, 50, 300 \rbrace$, and $H \in \lbrace 3, 5, 10 \rbrace$, therefore, we can compare the performance of the estimators for small, moderate and large values of sample/set size. We can also observe the effect of increasing sample size (set size) on the performance of the estimators while the set size (sample size) is being fixed. 

To perform ranking in the JPS samples, we use the method such as one  utilized  in  linear ranking error model by \citet{dell1972ranked}. 
According to this method, the ranking process is done by an auxiliary variable $(Y)$ correlated with variable of interest $(X)$ as
$$Y=\rho \frac{(X-\mu)}{\sigma}+\sqrt{1-\rho^2}Z,$$
 where $\mu$ and $\sigma^2$ are mean and variance of $X$, respectively, and   the random variable $Z$ is characterized by being independent of $X$, which itself follows a standard normal distribution.  Also, the user controls  quality of ranking by choosing the parameter $\rho\in[0,1]$, as the correlation coefficient between X and Y.    In this simulation study, we select $\rho=1$ for perfect ranking, $\rho=0.9$ for good ranking, $\rho=0.75$ for moderate ranking and $\rho=0.5$ for weak ranking. The kernel functions  utilized in the KDF estimators are epanechnikov kernel function of the form $k(x)=\frac{3}{4} (1-x^2);\ \lvert x  \rvert \le 1$,   triangular kernel function of the form $k(x)=1- \lvert x \rvert;\ \lvert x  \rvert \le 1$,  cosine kernel function of the form $k(x)=\frac{\pi}{4} cos(\frac{\pi}{2}x);\ \lvert x  \rvert \le 1$ and truncated gaussian kernel function of the form  $k(x)=\frac{\phi(x)}{\Phi(4)-\Phi(-4)};\ \lvert x  \rvert \le 4,$   where the functions  $\phi(.)$ and $\Phi(.)$ represent the PDF and CDF of standard normal distribution, respectively.
 
 The relative efficiency (RE) of $F_{n;srs}^k\left(t\right)$ with respect to $F_{n;jps}^k\left(t\right)$ is defined as the ratio of their MSEs, which can be expressed as follows
 $$RE(p)=\frac{MSE(F_{n;srs}^k\left(Q_p\right))}{MSE(F_{n;jps}^k\left(Q_p\right))},$$
 where $Q_p$ is the $p$th quantile of the parent distribution. It is clear that an $RE$ value larger than $1$ indicates an advantage of using  JPS estimator instead of SRS one at the $p$th quantile of the parent distribution. 

For each $\left(n, H, \rho \right)$, we have generated 100,000 random samples from JPS and SRS designs and obtained $RE(p)$ for $p \in \{0.01,0.02, \ldots,0.99\} $. Here, we only present the results in Figures \ref{pic1}-\ref{pic6} for epanechnikov kernel function, because we have observed that  choice of kernel function does not have much effect on the RE of the estimators.

 Figure \ref{pic1} presents the simulation results for sample size $n=10$, and three distributions with support on $\mathbb{R}$. It is clear from this figure that the patterns of the performance of the CDF estimators are almost the same across different distributions. The RE curves  are symmetric and reach their maximum values around their symmetrical point, $p=0.5$. Furthermore, the minimum values of the RE curves are observed at the points close to zero/one.  The first row panels in Figure \ref{pic1} display the outcomes for the perfect ranking scenario.
  It is clear from these panels that the RE curves for $H=3$ and $H=5$ are almost identical and are slightly higher than that of  for $H=10$. This can be justified by the fact that empty strata are frequently observed for $\left(n,H\right)=\left(10,10\right)$, so the post stratification with set size $H=10$ does not contribute much in improving estimation precision. Furthermore, the REs are higher than one except for two narrow intervals at the boundaries of the parent distribution. The simulation results for good ranking case ($\rho=0.9$) are depicted in the second row of Figure \ref{pic1}. From these panels, it can be observed that the patterns of the RE  curves closely resemble those of the perfect ranking case ($\rho=1$),  with a clear difference that the REs are lower in this case and the span of intervals in which the REs are below one become wider. Simulation results for  $\rho=0.75$, and $\rho=0.5$ are demonstrated in the third and the bottom rows of the Figure \ref{pic1}, respectively. It is evident that although the patterns of RE curves are the same as  those  of $\rho=0.9$,  RE values decrease as the value of $\rho$ decreases. Specifically, for $\rho=0.5$, the REs are below one except for a narrow interval around $p=0.5$. 
 
 The simulation results for sample size $n=50$, and three distributions with support on $\mathbb{R}$ are shown in Figure \ref{pic2}. It is obvious that the  RE curves have symmetric forms and are almost the same across different distributions. The three top panels of Figure \ref{pic2} show the results for $\rho=1$. It is evident from these panels that the efficiencies of the JPS estimators are increasing function in set size $H$, and thus RE curve for $H=10$ ($H=5$) is uniformly higher than that for $H=5$ ($H=3$). Each RE curve reaches its maximum for $p=0.5$ and minimum for $p=0.01$ and $p=0.99$. Furthermore, it has two local minimum points around $p=0.4$ and $p=0.6$, and falls below one for a very narrow interval close to zero/one. The panels in the second top row of Figure \ref{pic2} demonstrate the results for $\rho=0.9$. The patterns of the performances of JPS estimators are almost the same as those for $\rho=1$, with a obvious difference that the REs are lower in this case. We can observe the performances of the JPS estimators for $\rho=0.75$ in the panels of the third row of Figure \ref{pic2}. There are two main differences among the RE curves in this case and those for $\rho=0.9$. First, the RE curves are lower and the spans of the intervals in the REs are below of one are wider  for $\rho=0.75$ than $\rho=0.9$. Second, the RE curves are not increasing function in $H$, any more.  These differences are also observed when we compare the results for $\rho=0.5$ in the panels in the bottom row of Figure \ref{pic2} with those for $\rho=0.75$. It is of interest to note that the JPS estimator with $H=10$ has the lowest efficiency in this case.

  The simulation results for sample size $n=300$, and three distributions with support on $\mathbb{R}$ are shown in Figure \ref{pic3}. Thus, from this figure, we can observe the asymptotic performance of the estimators. It is clear that REs increase  (decrease) when the set size $H$ (ranking quality $\rho$) increases (decreases) while the other parameters are kept fixed. It is also worth mentioning that the REs never fall below one for $n=300$. 

Figures \ref{pic4}-\ref{pic6} present simulation results for lifetime distributions for sample sizes $n=10, 50$ and $300$, respectively. Figure \ref{pic4} shows the simulation results for lifetime distributions for $n=10$.  Based on the presented figure, it is evident 
 that the RE curves of the JPS estimators resemble different shapes for different distributions. From the panels in the first row of Figure \ref{pic4}, we observe the JPS estimator for $H=5$ ($H=10$) usually has the best (worst) performance. The RE curves are not symmetric, but they reach to their maximum around the center of the distribution and  (local minimum) minimum around the (lower) upper tail of the parent distribution which is below one.  From the panels in the second row of Figure \ref{pic4}, we observe the patterns of the performances of the JPS estimators for $\rho=0.9$ are almost the same as those for $\rho=1$, but they are lower. The simulation results are not in favor of the JPS estimators for $\rho=0.75$ and $\rho=0.5$, which are  shown in the third and bottom rows of Figure \ref{pic4}, respectively, as their REs are below one for most values of $p$. 

Simulation results for lifetime distributions when the sample size is $n=50$ are depicted in Figure \ref{pic5}. Based on the panels displayed in the first row of the provided figure, we can observe that in cases where the ranking process is perfect ($\rho=1$), the JPS estimator with $H=10$ performs better than the others in most considered cases, which is followed by the JPS estimators with $H=5$, and $H=3$, respectively. However, this pattern does not hold for $\rho=0.9$, as we see it in the second row of Figure \ref{pic5}, and the results for $H=5$ and $H=10$ are quite competitive. The simulation results for $\rho=0.75$ are not in favor of $H=10$ as shown in the third row of Figure \ref{pic5}.  In fact, the JPS estimator with $H=5$ has the best performance for most values of $p$, which is followed by the estimator with $H=3$. The three bottom panels of Figure \ref{pic5} show  the results for $\rho=0.5$. It is clear from these panels that the REs are close or below one, and the JPS estimator with a smaller set size usually has a better performance in this case. 

Simulation results for lifetime distributions and $n=300$ are shown in Figure \ref{pic6}.  From the provided figure, it can be observed that the  REs of the estimators increase as the set size $H$ becomes larger, and the JPS estimator with set size $H=10$ has the best performance in most considered case, which is followed by the estimators with set size $H=5$, and $H=3$, respectively. It is also worth mentioning that the efficiency of the JPS estimators decrease as the ranking quality decreases. Furthermore, the RE curves are mostly  higher than one, which indicate superiority of the JPS estimators to their SRS counterpart.

By overall examinations of Figures \ref{pic1}-\ref{pic6}, we find out that the REs are increasing function of sample size $n$. The efficiency of the JPS estimators are usually higher around the center of the parent distribution and lower at its boundaries. Furthermore, the REs decrease with the ranking quality $\rho$. The optimal value of  the set size $H$, which leads to a higher efficiency  depends on both ranking quality ($\rho$) and sample size ($n$). Specifically, a larger value of set size $H$ leads to a higher efficiency provided that the sample size is \textsf{large enough} and the ranking quality is \textsf{sufficiently good}. Therefore, small values for set sizes ($H=3, 5$) are recommended to be used in practice if the sample size is small or there is any doubt about ranking quality. 
  
\begin{figure}
\begin{center}
\includegraphics[scale=0.9]{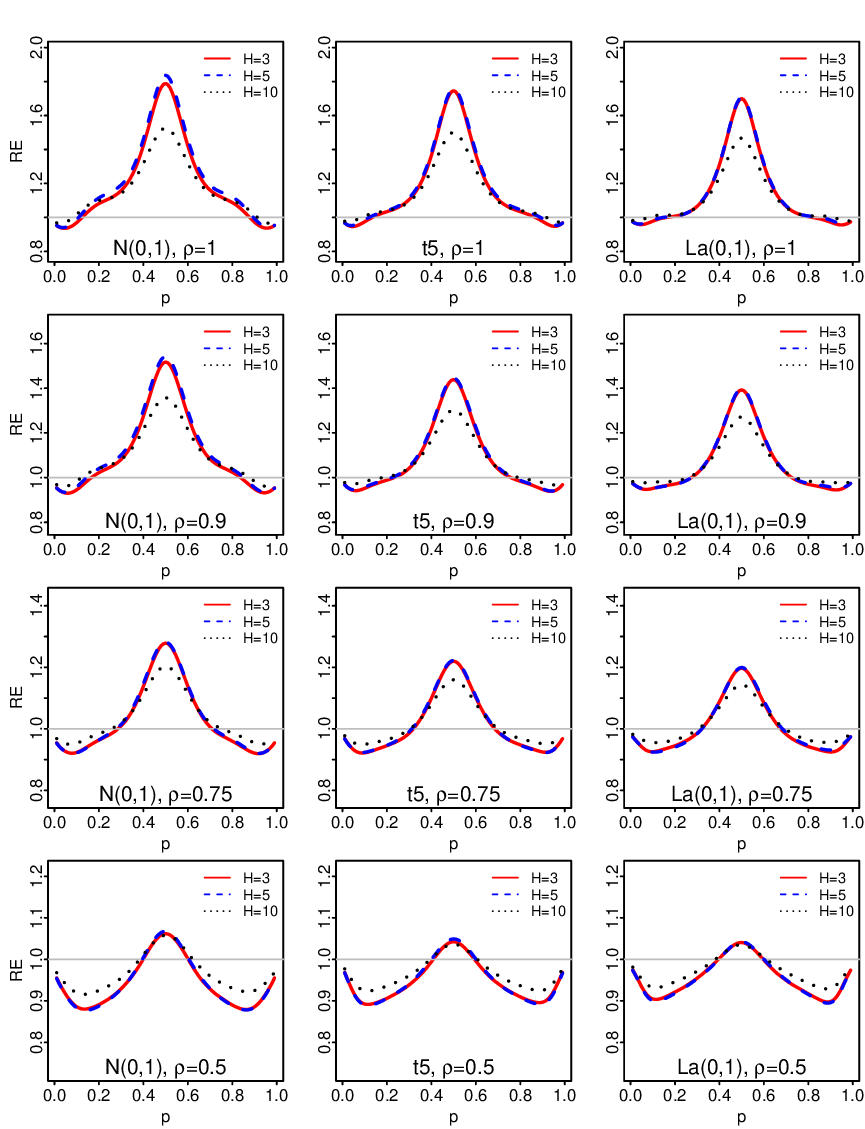}
\caption{\label{pic1}The estimated relative efficiency of $F_{n;srs}^k(.)$  to $ F_{n;jps}^{k}(.)$  as a function of $p$,  for $n=10$, when $H=3$ (represented by red and solid line), $H=5$ (represented by blue and dashed line), $H=10$ (represented by black and dotted line) and  $\rho \in \{1,0.9,0.75,0.5\}$ under $ N(0,1), t5$  and $La(0,1)$  distributions. }
\end{center}
\end{figure} 

\begin{figure}
\begin{center}
\includegraphics[scale=0.9]{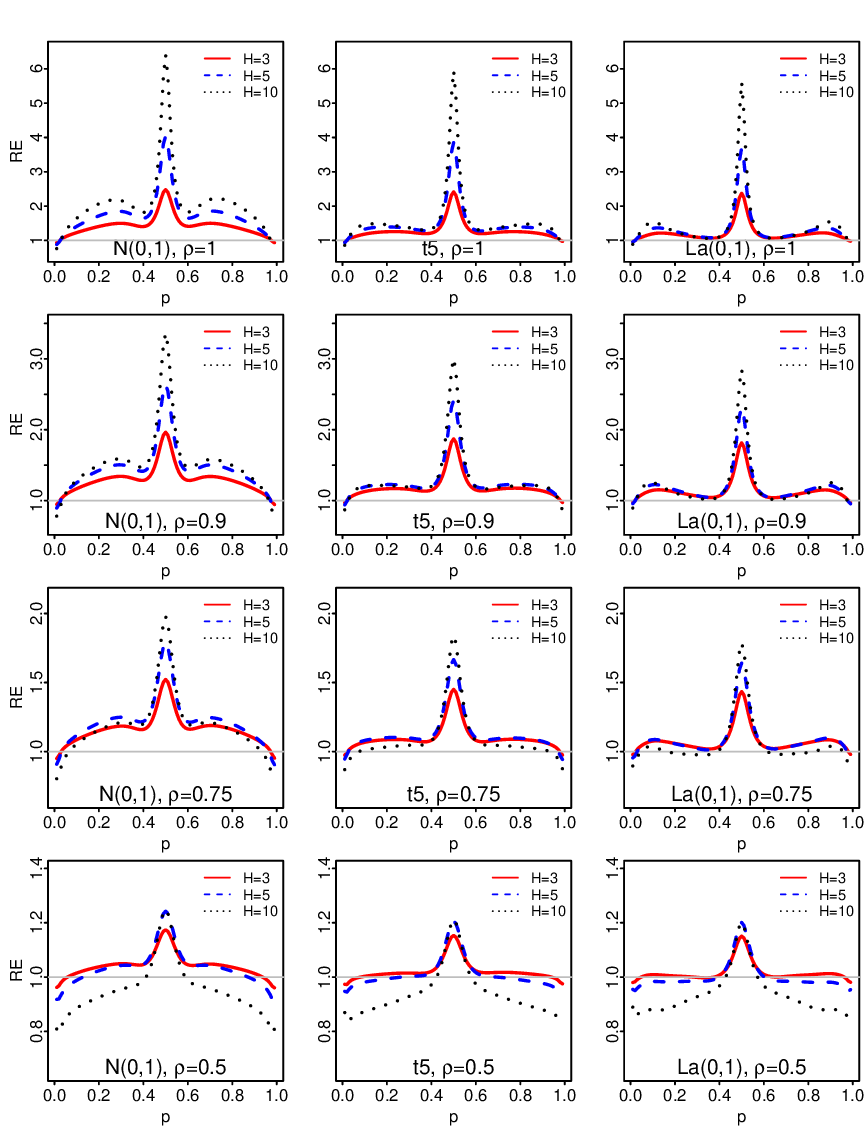}
\caption{\label{pic2}The estimated relative efficiency of $F_{n;srs}^k(.)$  to $ F_{n;jps}^{k}(.)$  as a function of $p$,  for $n=50$, when $H=3$ (represented by red and solid line), $H=5$ (represented by blue and dashed line), $H=10$ (represented by black and dotted line) and  $\rho \in \{1,0.9,0.75,0.5\}$ under $ N(0,1), t5$  and $La(0,1)$  distributions.}
\end{center}
\end{figure} 

\begin{figure}
\begin{center}
\includegraphics[scale=0.9]{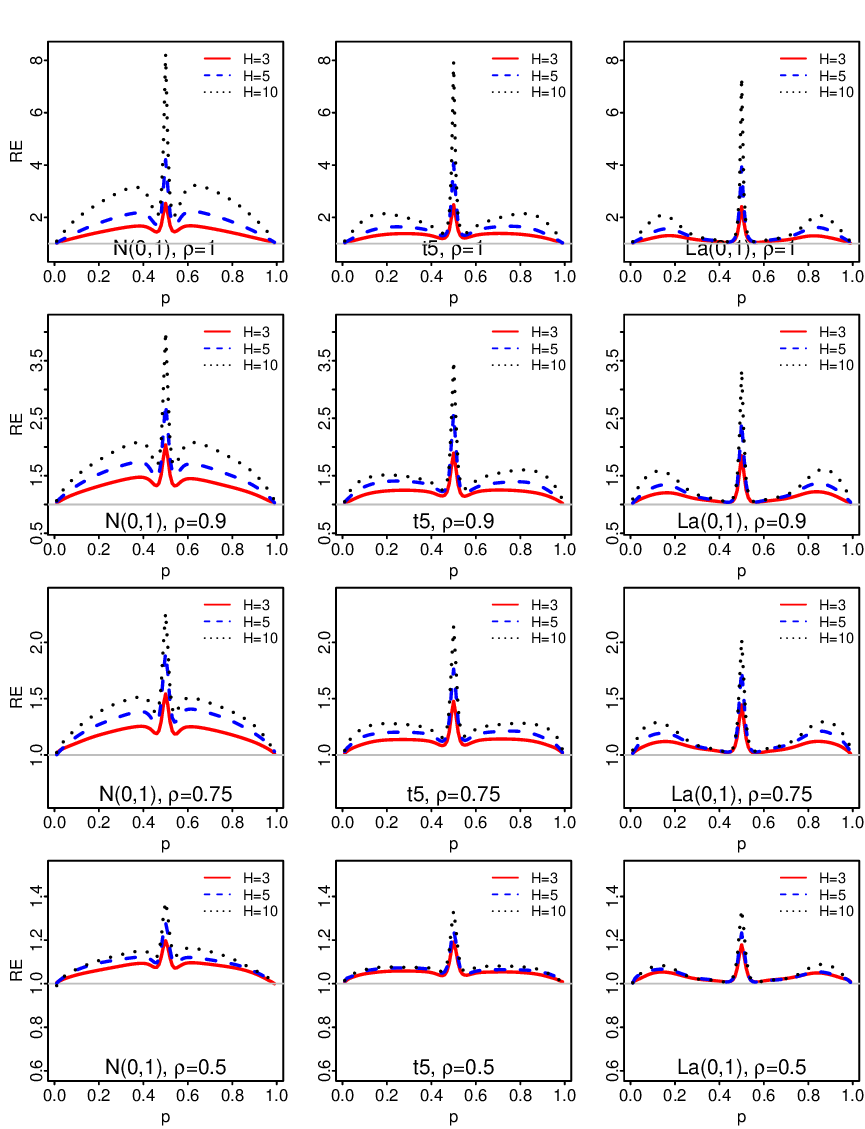}
\caption{\label{pic3}The estimated relative efficiency of $F_{n;srs}^k(.)$  to $ F_{n;jps}^{k}(.)$  as a function of $p$,  for $n=300$, when $H=3$ (represented by red and solid line), $H=5$ (represented by blue and dashed line), $H=10$ (represented by black and dotted line) and  $\rho \in \{1,0.9,0.75,0.5\}$ under $ N(0,1), t5$  and $La(0,1)$  distributions.}
\end{center}
\end{figure}

 \begin{figure}
\begin{center}
\includegraphics[scale=0.9]{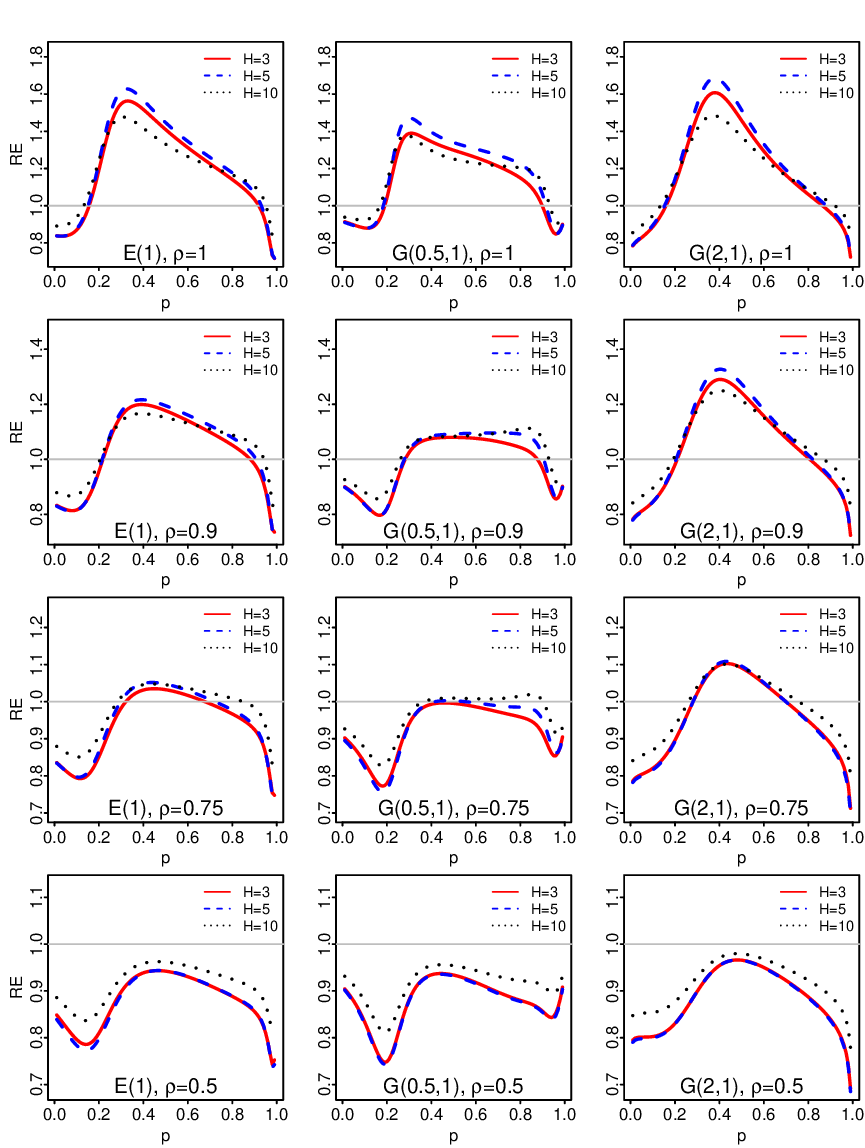}
\caption{\label{pic4}The estimated relative efficiency of $F_{n;srs}^k(.)$  to $ F_{n;jps}^{k}(.)$  as a function of $p$,  for $n=10$, when $H=3$ (represented by red and solid line), $H=5$ (represented by blue and dashed line), $H=10$ (represented by black and dotted line) and  $\rho \in \{1,0.9,0.75,0.5\}$ under $ E(1), G(0.5,1)$  and $G(2,1)$  distributions.}
\end{center}
\end{figure} 

 \begin{figure}
\begin{center}
\includegraphics[scale=0.9]{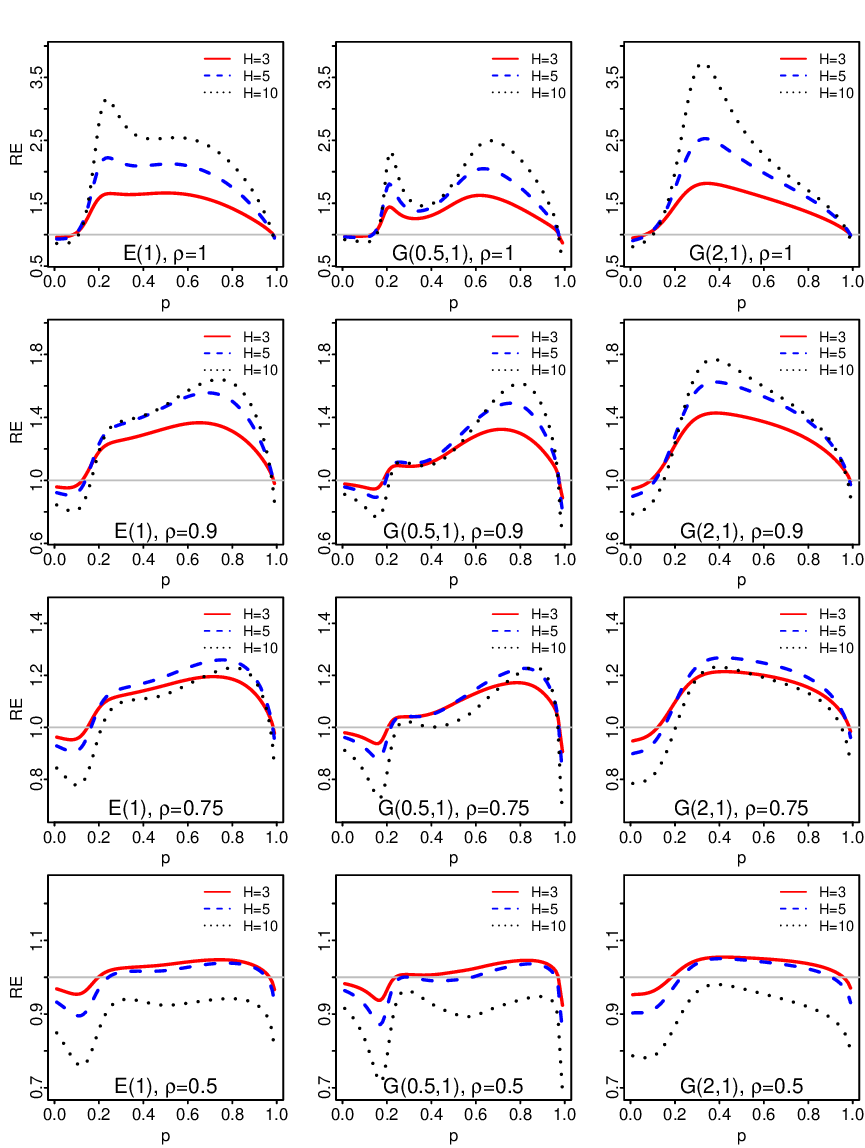}
\caption{\label{pic5}The estimated relative efficiency of $F_{n;srs}^k(.)$  to $ F_{n;jps}^{k}(.)$  as a function of $p$,  for $n=50$, when $H=3$ (represented by red and solid line), $H=5$ (represented by blue and dashed line), $H=10$ (represented by black and dotted line) and  $\rho \in \{1,0.9,0.75,0.5\}$ under $ E(1), G(0.5,1)$  and $G(2,1)$  distributions.}
\end{center}
\end{figure} 

 \begin{figure}
\begin{center}
\includegraphics[scale=0.9]{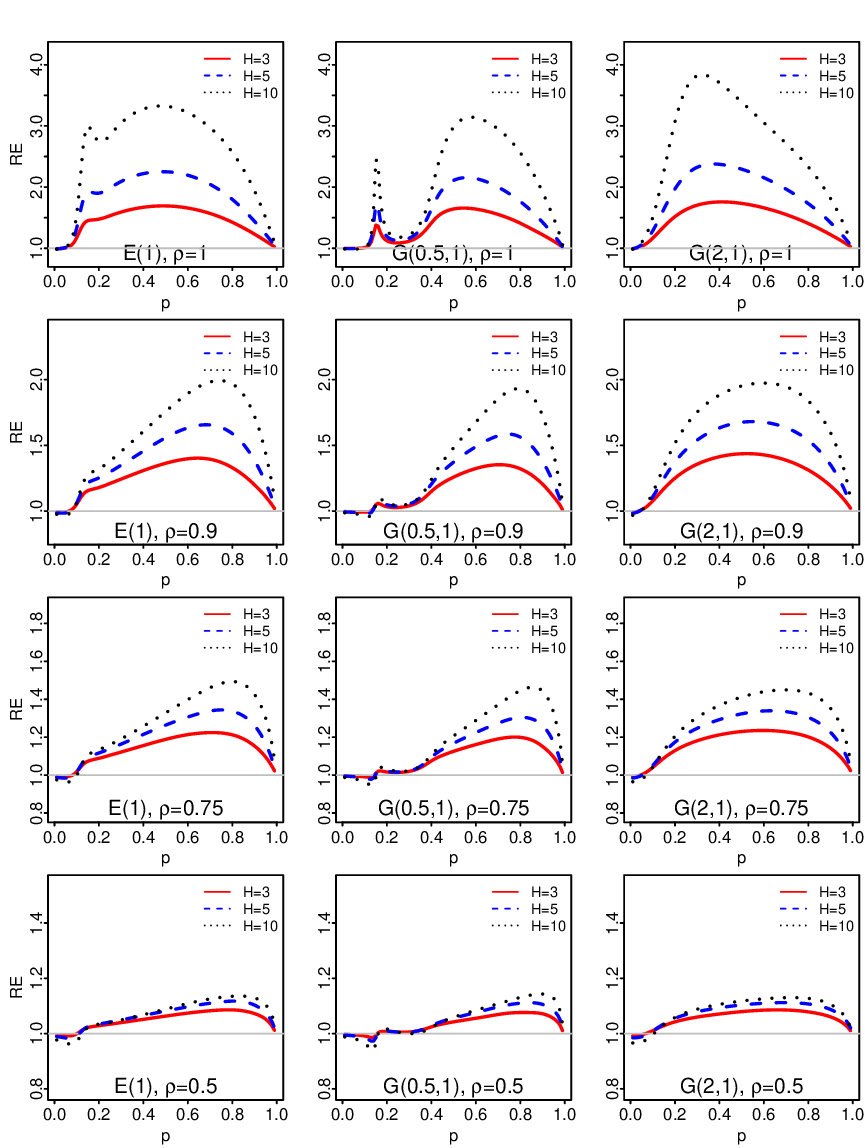}
\caption{\label{pic6}The estimated relative efficiency of $F_{n;srs}^k(.)$  to $ F_{n;jps}^{k}(.)$  as a function of $p$,  for $n=300$, when $H=3$ (represented by red and solid line), $H=5$ (represented by blue and dashed line), $H=10$ (represented by black and dotted line) and  $\rho \in \{1,0.9,0.75,0.5\}$ under $ E(1), G(0.5,1)$  and $G(2,1)$  distributions.}
\end{center}
\end{figure} 

\section{An Empirical Study: Body Fat Estimation}\label{Sec6}

Our purpose in this section is to demonstrate the applicability and efficiency of the introduced procedure in practice using a real dataset. The dataset is utilized to evaluate the performance of the CDF estimators in JPS and SRS settings  where the ranking process is  carried out using actual auxiliary variables, rather than ones generated by a model.

It is very important for organizations in health sector such as Food and Drug Administration (FDA), or the National Cancer Institute (NCI) to have an accurate estimation of the distribution of body fat of people in a given population. This estimate is very useful for these organizations,  since it can provide insights into the population's risk for some non-communicable illness such as diabetes, cardiovascular diseases,  and some certain types of cancers, which usually impose a huge financial burden on the health care system of the society.  However, the exact quantifying a person's body fat is a very challenging job and requires some advanced technologies. This is mainly because the body fat has not distributed evenly throughout the body. The exact measurement of a person's body fat is often obtained using some advanced imaging techniques such as Dual-Energy X-ray Absorptiometry (DEXA), Magnetic Resonance Imaging (MRI), or Computerized Tomography (CT) scans. Note that although these methods are very precise, they are very expensive as well, and need some special equipment and expertise which limit their applicability for large-scale population studies or even a standard clinical practice. 
 The body fat percentage variable is correlated with some easily available concomitant variables such as  abdomen circumference, chest circumference and weight.

The dataset utilized in this section is known as \textsf{bodyfat} dataset, and is accessible online at \href{https://lib.stat.cmu.edu/datasets/bodyfat}{https://lib.stat.cmu.edu/datasets/bodyfat}\footnote{Access date: 6 April 2024.}. This dataset contains body fat measurements determined using underwater weighting technique for 252 men along with measurements of some of their body circumferences. In what follows, we assume that this dataset represents our \textsf{hypothetical} population, and it is utilized to conduct a comparison of the performance exhibited by various CDF estimators. Assume that our objective is to estimate the true CDF of body fat among individuals,  which is given by 
$$
F\left(t\right)=\frac{1}{252}\sum_{i=1}^{252} \mathbb{I} \left(X_i\leq t \right),
$$
where $X_i$ is the body fat measurement of the $i$th subject in the hypothetical population. 
 
To compare the CDF estimators in the JPS and SRS settings, we set $n\in \lbrace 10, 20, 30, 40, 50 \rbrace$, $H\in \lbrace 3, 5, 10 \rbrace$, and for each combination of $\left(n, H\right)$, we have drawn 100,000 samples from both of JPS and SRS designs, where all samplings are considered with replacement and therefore the independence assumption is not violated. 

 To obtain a JPS sample, we assume that ranking process is carried out using one of the concomitant variables, abdomen circumference, chest circumference, and weight. The correlation coefficient between the variable of interest (body fat) and the concomitant variables are $\rho=0.81$, $\rho=0.70$, and $\rho=0.61$, respectively. 
 
 To include the perfect ranking case ($\rho=1$) in our study, we also use the variable of interest for ranking itself. We have utilized  epanechnikov  function for the kernel function, and compute the RE of the JPS estimator to its SRS counterpart as it is defined in Section \ref{Sec5} for $p \in \lbrace 0.1, 0.25, 0.5, 0.75, 0.9 \rbrace$. The results are given in Table \ref{Table1}.

The first part of the Table \ref{Table1} gives the results when the variable of the interest is ranked by itself, so the perfect ranking assumption holds. From the information presented in these rows, it is evident that one of the estimators derived from the JPS sampling outperforms the others and can be considered the superior estimator. 
 except for the cases that $n\leq 20$ and $p=0.1, 0.9$. The best estimator  in each case depends on the  sample size $n$, and the value of $p$. However, we observe that the performance of the JPS estimator improves as the value of $p$ approaches to $p=0.5$, or sample size $n$ increases. For $p=0.1$ or $p=0.9$, the best JPS estimator is mostly observed for  $H=3$. But, for $p \in\lbrace 0.25, 0.5, 0.75 \rbrace$, the JPS estimator with either $H=5$ or $H=10$ has the best performance depending on the magnitude of the sample size $n$. 

The other parts of the Table \ref{Table1} show the results for imperfect ranking case. We observe that although the general patterns of REs remain almost the same, the REs decrease as the value of $\rho$ decreases. Furthermore, the number of cases in which the JPS estimator cannot beat the SRS one increases as the value of $\rho$ reduces.  All of these observations align with the results presented in Section \ref{Sec5}.

\begin{landscape}
\begin{table}[]
\scriptsize
\centering
\begin{tabular}{l cl c l c l c l c l c l c l c l c l c l c  l c l c  lc l c  l c l c   l c  l c  lc l c  l}\hline
Concomitant &    \multicolumn{3}{c}{$p=0.1$}       &  &                \multicolumn{3}{c}{$p=0.25$}   &  &                \multicolumn{3}{c}{$p=0.5$}       &  &                \multicolumn{3}{c}{$p=0.75$}   &  &                \multicolumn{3}{c}{$p=0.9$}   \\
\cline{3-5}       \cline{7-9}    \cline{11-13}     \cline{15-17}      \cline{19-21}
Variable      & $n$ & $H=3$& $H=5$ & $H=10$  & & $H=3$ &  $H=5$   &  $H=10$ & & $H=3$ &  $H=5$   &  $H=10$ & & $H=3$ &  $H=5$   &  $H=10$ & & $H=3$ &  $H=5$   &  $H=10$  \\\hline\hline
                                 & 10    & 0.89 & 0.89 & 0.94    & &      1.09 & 1.11 & 1.12    & &    1.62 & 1.71& 1.48    & &    1.00 & 1.00 & 1.01    & &    0.90 & 0.90 & 0.93       \\ 
Body Fat                    & 20    & 0.98 &  0.92 & 0.91    & &     1.32 & 1.35 & 1.32    & &    1.84 & 2.34 & 2.21    & &   1.09 & 1.05 & 1.03    & &    0.96 & 0.89 & 0.88       \\ 
($\rho=1$)               & 30    &  1.04 & 1.00 &0.94   & &      1.47 &1.65 &  1.64    & &    1.90 & 2.65 & 3.01   & &  1.15 & 1.15& 1.07       & &     1.00 & 0.95 &0.86        \\ 
                                 & 40    & 1.07 &1.07 & 0.98   & &     1.55 &1.89 & 1.99   & &   1.93 &  2.75 & 3.63    & &    1.19 & 1.22 & 1.14    & &    1.02 & 0.98 & 0.88        \\ 
                                 & 50    &1.09 & \textbf{1.11} & 1.03   & &     1.61 & 2.04 & \textbf{2.38}    & &    1.93 & 2.76 & \textbf{4.15}    & &    1.21 &\textbf{1.26} & 1.21    & &    \textbf{1.03} & 1.01 &  0.91      \\            \\\hline \hline

                                & 10    & 0.85 & 0.84 & 0.90    & &     0.97 & 0.97 & 1.00    & &    1.28 & 1.32 & 1.24    & &    0.95 & 0.95 & 0.97    & &    0.88 & 0.88 & 0.91       \\ 
Abdomen                & 20    & 0.94 & 0.86 & 0.84    & &     1.13 & 1.08 & 1.05    & &    1.39 & 1.52 & 1.46    & &    1.02 & 0.96 & 0.95    & &    0.94 & 0.86 & 0.85       \\ 
Circumference         & 30    & 0.99 & 0.92 & 0.86    & &     1.22 & 1.23 & 1.16    & &    1.43 & 1.61 & 1.62    & &    1.07 & 1.03 & 0.95    & &    0.98 & 0.91 & 0.82       \\ 
($\rho=0.81$)                                & 40    & 1.02 & 0.98 & 0.87   & &     1.26 & 1.35 & 1.27    & &    1.43 & 1.65 & 1.74   & &    1.09 & 1.09 & 1.00   & &    1.00 & 0.95 & 0.84       \\ 
                                & 50    &\textbf{1.04} & 1.02 & 0.91    & &     1.31& \textbf{1.42} & 1.39    & &    1.45 & 1.67 & \textbf{1.81}    & &   1.11 & \textbf{1.12} & 1.04   & &   \textbf{1.01} &0.98 & 0.86       \\               \\\hline \hline

                               & 10    & 0.83 & 0.82 & 0.88    & &     0.90 & 0.91 & 0.94    & &    1.13 & 1.15 & 1.12    & &    0.92 & 0.92 & 0.95    & &    0.87 & 0.87 & 0.91       \\ 
Chest                      & 20    & 0.92 & 0.82 & 0.81    & &     1.03 & 0.95 & 0.93    & &    1.21 & 1.23 & 1.21    & &    0.98 & 0.92 & 0.90    & &    0.92 & 0.85 & 0.83       \\ 
Circumference          & 30    &  0.96 & 0.89 & 0.80    & &    1.10 & 1.06 & 0.96    & &    1.23 & 1.28 & 1.25    & &    1.03 & 0.97 & 0.89    & &    0.96 & 0.89 & 0.81       \\ 
($\rho=0.7$)                               & 40    & 0.99 & 0.95 & 0.82   & &     1.14 & 1.14 & 1.03    & &    1.24 & 1.32 & 1.29    & &    1.05 & 1.02 & 0.92    & &    0.98 & 0.93 & 0.81       \\ 
                               & 50    & \textbf{1.01} & 0.97 & 0.86    & &     1.16 &\textbf{1.18} & 1.10    & &    1.24 & \textbf{1.33} & 1.32   & &    1.05 & \textbf{1.06} & 0.95    & & {0.99} & 0.95 & 0.83       \\                \\\hline\hline           

                              & 10    & 0.82 & 0.82 & 0.87    & &     0.89 & 0.88 & 0.93    & &    1.08 & 1.09 & 1.08    & &    0.91 & 0.90 & 0.94    & &    0.87 & 0.86 & 0.90       \\ 
Weight                   & 20    & 0.90 & 0.81 & 0.80    & &     0.99 & 0.91 & 0.90    & &    1.14 & 1.14 & 1.12    & &    0.97 & 0.90 & 0.88    & &    0.92 & 0.84 & 0.82       \\ 
($\rho=0.61$)       & 30    & 0.96 & 0.88 & 0.79    & &     1.07 & 1.02 & 0.93    & &    1.17 & 1.19 & 1.16    & &    1.01 & 0.95 &0.87     & &    0.96 & 0.88 & 0.80       \\ 
                              & 40    & 0.98 & 0.93 & 0.82    & &     1.10 & 1.09 & 0.99    & &    1.18 & 1.22 & 1.19   & &    1.03 & 0.99 & 0.89    & &    0.97 & 0.92 & 0.80       \\ 
                              & 50    & 0.99 & 0.96 & 0.84    & &     1.12 & \textbf{1.14} & 1.04    & &   1.19 &\textbf{1.25} & 1.22    & &    \textbf{1.04} & 1.02 & 0.93    & &    0.98 & 0.94 & 0.83       \\                    \\\hline         
\end{tabular}
\caption{The estimated relative efficiency of $F_{n;srs}^k(.)$  to $ F_{n;jps}^{k}(.)$  using the bodyfat dataset.  In each scenario, the winner is indicated by bold font.}
\label{Table1}
\end{table}
\end{landscape}

\section{Discussion}\label{Sec7}
Judgment post stratification (JPS) sampling plan is beneficial sampling plan for situations in which allocating a judgment rank to a sample unit in a set is far easier/cheaper than   precisely  quantifying it. It is often considered as a more flexible and practical variation of ranked set sampling (RSS).  Despite the extensive utilization of this sampling approach in various fields, the literature lacks discussions on the issue of effectively smooth estimating the cumulative distribution function (CDF).

In this paper, we discussed a general class of the estimators for the population CDF in the JPS setting which includes both empirical and kernel-based ones.  We showed that they are  more efficient than their competitors in simple random sampling (SRS), as long as the ranking quality is better than random guessing. We found the necessary and sufficient condition for the estimators to be asymptotically unbiased. Also, we studied the  Glivenko-Cantelli type convergence and  asymptotic normality for the estimators in this class, assuming the same condition holds.
We next focused on kernel distribution function (KDF) in the class and proposed optimal bandwidth.  We conducted a comparative analysis between the performance of the KDF  in JPS sampling  and its competitor in SRS  using Monte Carlo simulation. This analysis involved investigating various combinations of sample size, set size, ranking quality, parent distribution, and kernel function.

 It was found that the JPS estimator outperforms its SRS competitor in the most considered cases. Finally, we showed the applicability and efficiency of our introduced procedure in practice using a real dataset in which real concomitant variables were used for the ranking process.

To the best of our knowledge, this work was the first attempt for estimating the CDF using kernel function based on JPS sampling scheme. So, There is still plenty of room for more research.  As an example, consider a parameter of interest denoted by $\theta$, which can be expressed as $\theta = g(F)$, where $g(.)$ represents a known function.
 One can think of estimating the parameter $\theta$ by replacing the CDF $F$ by an appropriate estimator from the discussed class, and establishing its statistical properties. Moreover, it was shown that the true CDF of the post strata often follow the constraint: $F_{[1]}\left(t\right)\geq\ldots  \geq F_{[H]}\left(t\right)$, for all $t\in \mathbb{R}$. However, this constraint may not hold for sample estimates. \cite{wang2012isotonized} improves the EDF in the JPS setting by imposing this limitation onto estimation process. Therefore, another interesting topic for future research can be how to utilize this limitation to improve the performance of  the KDF in the JPS setting. These topics can be discussed in  future works.

\section*{ Data Availability Statement}
The data supporting the findings of this study can be accessed online at  \href{https://lib.stat.cmu.edu/datasets/bodyfat}{https://lib.stat.cmu.edu/datasets/bodyfat} (Access date: 6 April 2024).

\newpage
\newpage
\bibliographystyle{plainnat}     
\bibliography{ref2.bib}

\end{document}